\let\originalleft\left
\let\originalright\right
\renewcommand{\left}{\mathopen{}\mathclose\bgroup\originalleft}
\renewcommand{\right}{\aftergroup\egroup\originalright}
\newcommand{\ket}[1]{\left| #1 \right\rangle}
\def\pgfdecoratedcontourdistance{0pt}
    \pgfmathsetlengthmacro\pgfdecoratedcontourdistance{#1}}
\let\pgf@decorate@firstsegmentangle\pgfdecoratedangle]{%
    \pgfpathmoveto{\pgfpointlineattime{.5}
      {\pgfqpoint{0pt}{\pgfdecoratedcontourdistance}}
      {\pgfqpoint{\pgfdecoratedinputsegmentlength}{\pgfdecoratedcontourdistance}}}%
  }%
      \pgfmathsetmacro\pgfdecoratedangletonextinputsegment{%
        -\pgfdecoratedangle+\pgf@decorate@firstsegmentangle}%
    \pgfmathsetlengthmacro\pgf@decoration@contour@shorten{%
      -\pgfdecoratedcontourdistance*cot(-\pgfdecoratedangletonextinputsegment/2+90)}%
\tikzset{
  contour/.style={
    decoration={
      name=contour lineto closed,
      contour distance=#1
    },
    decorate}}
\newcommand{\ex}[3]{\ensuremath{\textsc{Ex}_{#1}^{#2\mid#3}}}
\newcommand{\exnn}{\ex{2m}{m}{m+1}}
\DeclareMathOperator*{\orf}{\textsc{Or}}
\DeclareMathOperator*{\andf}{\textsc{And}}
\DeclareMathOperator{\dyck}{\textsc{Dyck}}
\DeclareMathOperator{\im}{imbal}
\DeclareMathOperator{\dirtwod}{\textsc{Directed-2D-Connectivity}}
\DeclareMathOperator{\undirtwod}{\textsc{Undirected-2D-Connectivity}}
\DeclareMathOperator{\dirdd}{\textsc{Directed-}d\textsc{D-Connectivity}}
\DeclareMathOperator{\undirdd}{\textsc{Undirected-}d\textsc{D-Connectivity}}
\begin{document}

\bookmarksetup{startatroot}

\title{Quantum Lower Bounds for 2D-Grid and Dyck Language\thanks{Supported by QuantERA ERA-NET Cofund in Quantum Technologies
implemented within the European Union's Horizon 2020 Programme (QuantAlgo project)
and ERDF project 1.1.1.5/18/A/020 ``Quantum algorithms: from complexity theory to experiment''.}
}
\author{\texorpdfstring{Andris Ambainis \and Kaspars Balodis \and J\={a}nis Iraids \and \\ Kri\v{s}j\={a}nis Pr\={u}sis \and Juris Smotrovs}{Andris Ambainis, Kaspars Balodis, J\={a}nis Iraids, Kri\v{s}j\={a}nis Pr\={u}sis, Juris Smotrovs}}

\institute{Center for Quantum Computer Science, Faculty of Computing, University of Latvia. }

\theoremstyle{plain}
\newtheorem{cor}{Corollary}

\maketitle

\bookmarksetup{startatroot}

\begin{abstract}
We show quantum lower bounds for two problems. First, we consider the problem of determining 
if a sequence of parentheses is a properly balanced one ({\em a Dyck word}), with a depth of at most $k$. It has been known that, for any $k$, $\tilde{O}(\sqrt{n})$ queries suffice, with a $\tilde{O}$ term depending on $k$. We prove a lower bound of $\Omega(c^k \sqrt{n})$, showing that the complexity of this problem increases exponentially in $k$. This is interesting as a representative example of star-free languages for which a surprising $\tilde{O}(\sqrt{n})$ query quantum algorithm was recently constructed by Aaronson et al. \cite{AGS18}.

Second, we consider connectivity problems on directed/undirected grid in 2 dimensions, if some of the edges of the grid may be missing. By embedding the ``balanced parentheses'' problem into the grid, we show a lower bound of $\Omega(n^{1.5-\epsilon})$ for the directed 2D grid and
$\Omega(n^{2-\epsilon})$ for the undirected 2D grid.
The directed problem is interesting as a black-box model for a class of classical dynamic programming strategies including the one that is usually used for the well-known edit distance problem. 
We also show a generalization of this result to more than 2 dimensions.
\end{abstract}

\pdfbookmark[1]{Introduction}{intro}
\section{Introduction}
\label{s:intro}

We study the quantum query complexity of two problems:

{\bf Quantum complexity of regular languages.} 
Consider the problem of recognizing whether an $n$-bit string belongs to a given regular language. This models a variety of computational tasks that can be described by regular languages.

In the quantum case, the most commonly used model for studying the complexity of various problems is the query model. For this setting, Aaronson, Grier and Schaeffer \cite{AGS18} recently showed that any regular language $L$ has one of three possible quantum query complexities on inputs of length $n$: 
\begin{itemize}
\item
$\Theta(1)$ if the language can be decided by looking at $O(1)$ first or last symbols of the word;
\item
$\tilde{\Theta}(\sqrt{n})$ if the best way to decide $L$ is Grover's search (for example, for the language consisting of all words containing at least one letter a);
\item
$\Theta(n)$ for languages in which we can embed counting modulo some number $p$ which has quantum query complexity $\Theta(n)$.
\end{itemize} 
As shown in \cite{AGS18}, a language being of complexity $\tilde{O}(\sqrt{n})$ (which includes the first two cases above) is equivalent to it being star-free. Star-free languages are defined as the languages which have regular expressions not containing the Kleene star (if it is allowed to use the complement operation). Star-free languages are one of the most commonly studied subclasses of regular languages and there are many equivalent characterizations of them.

The main open problem from \cite{AGS18} is: 
how quickly does the complexity of the 
$\tilde{O}(\sqrt{n})$-query algorithm grow with the complexity of the language? The algorithm given in
\cite{AGS18} involves nested Grover's search, with the number of levels depending on the complexity of $L$. 
Namely, the complexity of this algorithm is 
$O(\sqrt{n} \log^k n)$ where $k$ is the number of states in the syntactic monoid of $L$. ($k$ is at least the number of states $s$ in the smallest automaton recognizing $L$ and can be exponentially larger than $s$.) The question is: can one avoid this exponential dependence of the complexity on $k$?

The simplest concrete example is the language of balanced parentheses of depth bounded by $k$. This language consists of all sequences of opening parentheses ``\texttt{(}'' and closing parentheses ``\texttt{)}'' satisfying the following requirements: 
\begin{itemize}
\item
at any point, the number of closing parentheses never exceeds the number of opening parentheses (i.e., we never try to close a parenthesis that has not been opened);
\item
at the end of the word, the number of closing parentheses is exactly equal to the number of opening parentheses (i.e. all parentheses have been closed);
\item
at any point, the number of opening parentheses exceeds the number of closing parentheses by at most $k$.
\end{itemize}
For example, for $k=2$, \texttt{()()()}, \texttt{(()())}, \texttt{(())(())} are all valid sequences of parentheses whereas \texttt{())(}, \texttt{()((}, and \texttt{((()))} are not.

For this language, the complexity of the algorithm also grows exponentially with $k$ (a naive application of \cite{AGS18} seems to give $O(\sqrt{n} \log^{ck^3} n)$ for some constant $c$ although it is possible that this can be improved to $O(\sqrt{n} \log^{ck} n)$). The question of whether a better dependence on $k$ can be achieved is open. On the lower bound side, \cite{AGS18} shows a lower bound of $\Omega(n)$ if $k=n/2$ but that means that even an algorithm with a complexity of $O(\sqrt{kn})$ could be possible.

{\bf Our results.} We show that an exponential dependence of the complexity on $k$ is unavoidable. Namely, for the balanced parentheses language, we have
\begin{itemize}
\item
there is a constant $c>1$ such that, for all $k\leq \log n$, the quantum query complexity is $\Omega(c^k \sqrt{n})$;
\item
If $k=c\log n$ for an appropriately chosen constant $c$, the quantum query complexity is $\Omega(n^{1-\epsilon})$. 
\end{itemize}

Thus, the exponential dependence on $k$ is unavoidable and distinguishing sequences of balanced parentheses of length $n$ and depth $\log n$ is almost as hard as distinguishing sequences of length $n$ and arbitrary depth. 

Similar lower bounds have recently been independently proven by Buhrman et al. \cite{buhrman2019quantum} and Magniez \cite{Mag19}.

{\bf Finding paths on a grid.}
The second problem that we consider is graph connectivity on subgraphs of the 2D grid. Consider a 2D grid with vertices $(i, j)$,
$i\in \{0, 1, \ldots, n\}, j\in \{0, 1, \ldots, k\}$ and edges from $(i, j)$ to $(i+1, j)$ and $(i, j+1)$. The grid can be either directed (with edges directed in the order of increasing coordinates) or undirected. We are given an unknown subgraph $G$ of the 2D grid and we can perform queries to variables $x_u$ (where $u$ is an edge of the grid) defined by $x_u=1$ 
if $u$ belongs to $G$ and 0 otherwise.
The task is to determine whether $G$ contains a path from $(0, 0)$ to $(n, k)$ (a directed path in the case of the directed grid and an undirected path for the undirected grid).  

Our interest in this problem is driven by the edit distance problem. In the edit distance problem, we are given two strings $x$ and $y$ and have to determine the smallest number of operations (replacing one symbol by another, removing a symbol or inserting a new symbol) with which one can transform $x$ to $y$. If $|x|\leq n, |y|\leq k$,
the edit distance is solvable in time $O(nk)$ by dynamic programming \cite{wagner1974string}. If $n=k$ then, under the strong exponential time hypothesis (SETH), there is no classical algorithm computing edit distance in time $O(n^{2-\epsilon})$ for $\epsilon>0$ \cite{backurs2015edit} and the dynamic programming algorithm is essentially optimal.

However, SETH does not apply to quantum algorithms.
Namely, SETH asserts that there is no algorithm for general instances of SAT that is substantially better than naive search. Quantumly, a simple use of Grover's search gives a quadratic advantage over naive search. This leads to the question: can this quadratic advantage be extended to edit distance (and other problems that have lower bounds based on SETH)?

Since edit distance is quite important in classical algorithms, the question about its quantum complexity has attracted a substantial interest from various researchers. Boroujeni et al. \cite{boroujeni2018approximating} invented a better-than-classical quantum algorithm and an even faster parallel algorithm for approximating the edit distance within a constant factor. The quantum algorithm was later superseded by a better classical algorithm of \cite{C+18}. However, there are no quantum algorithms computing the edit distance exactly (which is the most important case).

The classical algorithm for edit distance can be viewed as consisting of the following steps:
\begin{itemize}
\item
We construct a weighted version of the directed 2D grid (with edge weights 0 and 1) that encodes the edit distance problem for strings $x$ and $y$, with the edit distance being equal to the length of the shortest directed path from $(0, 0)$ to $(n, k)$.
\item
We solve the shortest path problem on this graph and obtain the edit distance.
\end{itemize}
As a first step, we can study the question of whether the shortest path is of length 0 or more than 0. Then, we can view edges of length 0 as present and edges of length 1 as absent. The question ``Is there a path of length of 0?'' then becomes ``Is there a path from $(0, 0)$ to $(n, k)$ in which all edges are present?''. A lower bound for this problem would imply a similar lower bound for the shortest path problem and a quantum algorithm for it may contain ideas that would be useful for a shortest path quantum algorithm.

{\bf Our results.}
We use our lower bound on the balanced parentheses language to show an $\Omega(n^{1.5-\epsilon})$ lower bound for the connectivity problem on the directed 2D grid. This shows a limit on quantum algorithms for finding edit distance through the reduction to shortest paths. More generally, for an $n\times k$ grid ($n>k$), our proof gives a lower bound of
$\Omega((\sqrt{n}k)^{1-\epsilon})$.

The trivial upper bound is $O(nk)$ queries, since there are $O(nk)$ variables. There is no nontrivial quantum algorithm, except for the case when $k$ is very small. Then, the connectivity problem can be solved with $O(\sqrt{n} \log^{k-1} n)$ quantum queries \cite{Kle17} \footnote{Aaronson et al. \cite{AGS18} also give a bound of $O(\sqrt{n} \log^{m-1} n)$ but in this case $m$ is the rank of the syntactic monoid which can be exponentially larger than $k$.} but this bound becomes trivial already for $k=\Omega(\frac{\log n}{\log \log n})$.

For the undirected 2D grid, we show a lower bound of $\Omega((nk)^{1-\epsilon})$, whenever $k\geq \log n$. Thus, the naive algorithm is almost optimal in this case. We also extend both of these results to higher dimensions, obtaining a lower bound of 
$\Omega((n_1 n_2 \ldots n_d)^{1-\epsilon})$ for an undirected $n_1 \times n_2 \times \ldots \times n_d$ grid in $d$ dimensions and a lower bound of 
$\Omega(n^{(d+1)/2-\epsilon})$ for a directed $n\times n \times \ldots \times n$ grid in $d$ dimensions.

In a recent work, an $\Omega(n^{1.5})$ lower bound for edit distance was shown by Buhrman et al. \cite{buhrman2019quantum}, 
assuming a quantum version of Strong Exponential Time hypothesis (QSETH). One of steps in this lower bound 
is an $\Omega(n^{1.5})$ query lower bound for a different path problem on a 2D grid. Then, QSETH is invoked to prove
that no quantum algorithm can be faster than the best algorithm for this shortest path problem. 
Our main lower bound result is of similar nature as the query part of \cite{buhrman2019quantum} but neither of the two
results follow directly one from another, because of different shortest path problems that are used.

\section{Definitions}
\label{s:defs}

For a word $x\in\Sigma^*$ and a symbol $a\in\Sigma$ by $|x|_a$ we denote the number of occurrences of $a$ in $x$.

For two (possibly partial) Boolean functions $g: G \rightarrow \{0,1\}$, where $G \subseteq \{0,1\}^n$, and $h:H\rightarrow \{0,1\}$, where $H\subseteq \{0,1\}^m$, we define the composed function $g\circ h: D\rightarrow \{0,1\}$, with $D\subseteq \{0,1\}^{nm}$, as
\[
\left(g\circ h\right) (x) = g\left(h(x_1,\dots,x_m), \dots, h(x_{(n-1)m+1},\dots, x_{nm})\right).
\]
Given a Boolean function $f$ and a nonnegative integer $d$, we define $f^d$ recursively as $f$ iterated $d$ times: $f^d=f\circ f^{d-1}$ with $f^1=f$.

{\bf Quantum query model.}
    Quantum query model is a generalization of the decision tree model of classical computation that is commonly used to lower bound the amount of time required by a computation. Let $f:D\rightarrow \{0,1\},D\subseteq \{0,1\}^n$ be an $n$ variable function we wish to compute on an input $x\in D$. We have an oracle access to the input $x$ --- it is realized by a specific unitary transformation usually defined as $\ket{i}\ket{z}\ket{w}\rightarrow \ket{i}\ket{z+x_i\pmod{2}}\ket{w}$ where $\ket{i}$ register indicates the index of the variable we are querying, $\ket{z}$ is the output register, and $\ket{w}$ is some auxiliary work-space. An algorithm in the query model consists of alternating applications of arbitrary unitaries independent of the input and the query unitary, and a measurement in the end. The smallest number of queries for an algorithm that outputs $f(x)$ with probability $\geq \frac{2}{3}$ on all $x$ is called the quantum query complexity of function $f$ and denoted by $Q(f)$.
    
    Let a symmetric matrix $\Gamma$ be called an adversary matrix for $f$ if the rows and columns of $\Gamma$ are indexed by inputs $x\in D$ and $\Gamma_{xy}=0$ if $f(x)=f(y)$. Let $\Gamma^{(i)}$ be a similarly sized matrix such that $\Gamma^{(i)}_{xy}=\begin{cases}\Gamma_{xy}&\text{ if }x_i\neq y_i\\ 0&\text{ otherwise}\end{cases}$. Then let 
    \[Adv^{\pm}(f)=\max_{\Gamma\text{ - an adversary matrix for }f}{\frac{\|\Gamma\|}{\max_i{\|\Gamma^{(i)}\|}}}\]
    be called the adversary bound and let
    \[Adv(f)=\max_{\substack{\Gamma\text{ - an adversary matrix for }f\\ \Gamma \text{ - nonnegative}}}{\frac{\|\Gamma\|}{\max_i{\|\Gamma^{(i)}\|}}}\]
    be called the positive adversary bound.
    
    The following facts will be relevant for us:
    \begin{itemize}
        \item $Adv(f)\leq Adv^\pm(f)$;
        \item $Q(f)=\Theta(Adv^{\pm}(f))$ \cite{Reichardt11};
        \item $Adv^{\pm}$ composes exactly even for partial Boolean functions $f$ and $g$, meaning, $Adv^\pm(f\circ g)=Adv^\pm(f)\cdot Adv^\pm(g)$ \cite[Lemma~6]{kimmel2012quantum}
    \end{itemize}
    
{\bf Reductions.}
    We will say that a Boolean function $f$ is reducible to $g$ and denote it by $f \leqslant g$ if there exists an algorithm that given an oracle $O_x$ for an input of $f$ transforms it into an oracle $O_y$ for $g$ using at most $O(1)$ calls of oracle $O_x$ such that $f(x)$ can be computed from $g(y)$. Therefore, from $f \leqslant g$ we conclude that $Q(f)\leq Q(g)$ because one can compute $f(x)$ using the algorithm for $g(y)$ and the reduction algorithm that maps $x$ to $y$.

{\bf Dyck languages of bounded depth.}
Let $\Sigma$ be an alphabet consisting of two symbols: \texttt{(} and \texttt{)}.
The Dyck language $L$ consists of all $x\in \Sigma^*$ that represent a correct sequence of opening and closing parentheses.
We consider languages $L_k$ consisting of all words $x\in L$ where the number of opening parentheses that are not closed yet never exceeds $k$.

The language $L_k$ corresponds to a query problem $\dyck_{k, n}(x_1, ..., x_n)$ where $x_1, \ldots, x_n \in \{0, 1\}$ describe a word of length $n$
in the natural way: the $i^{\rm th}$ symbol of $x$ is \texttt{(} if $x_i=0$ and \texttt{)} if $x_i=1$. $\dyck_{k, n}(x)=1$ iff the word $x$ belongs to $L_k$.

{\bf Connectivity on a directed 2D grid.} Let $G_{n, k}$ be a directed version of an $n\times k$ grid in two dimensions, with vertices $(i, j), i\in [n], j\in [k]$
and directed edges from $(i, j)$ to $(i+1, j)$ (if $i<n$) and from $(i, j)$ to $(i, j+1)$ (if $j<k$).
If $G$ is a subgraph of $G_{n, k}$, we can describe it by variables $x_e$ corresponding to edges $e$ of $G_{n, k}$: $x_e=1$ 
if the edge $e$ belongs to $G$ and $x_e=0$ otherwise. We consider a problem $\dirtwod$ in which one has to determine if $G$ contains a 
path from $(0, 0)$ to $(n, k)$: 
$$\dirtwod_{n, k}(x_1, \ldots, x_m)=1$$ (where $m$ is the number of edges in $G_{n, k}$)
iff such a path exists. 

{\bf Connectivity on an undirected 2D grid.} Let $G'_{n, k}$ be an undirected $n\times k$ grid and let $G$ be a subgraph of $G_{n, k}$. We describe $G$ by 
variables $x_e$ in a similar way and define $\undirtwod_{n, k}(x_1, \ldots, x_m)=1$ iff $G$ contains a path from $(0, 0)$ to $(n, k)$.

We also consider $d$ dimensional versions of these two problems, on $n\times n \times \ldots n$ grids (with the grid being of the same length in all the dimensions). 
In the directed version ($\dirdd$), we have a subgraph $G$ of a directed grid (with edges directed in the directions from $(0, \ldots, 0)$ to
$(n, \ldots, n)$) and 
$$\dirdd(x_1, \ldots, x_m)=1$$
iff $G$ contains a directed path from $(0, \ldots, 0)$ to
$(n, \ldots, n)$.
The undirected version is defined similarly, with an undirected grid instead of the directed one.

\section{Lower bounds for Dyck languages}
\label{s:dyck}

\begin{theorem}
\label{t:dycklowerbound-power}
There exist constants $c_{1},c_{2}>0$ such that $Q\left(\dyck_{c_{1}\ell m,c_{2}\left(2m\right)^{\ell}}\right)=\Omega\left(m^{\ell}\right)$.
\end{theorem}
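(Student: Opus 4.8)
\noindent The plan is to reduce an $\ell$-fold composed ``exact-weight-gap'' function to the bounded-depth Dyck problem, and to lower-bound the composed function with the (negative) adversary bound. Let $F=\exnn$ denote the partial Boolean function on $2m$ bits that equals $0$ on inputs of Hamming weight $m$, equals $1$ on inputs of Hamming weight $m+1$, and is undefined otherwise, and let $F^{\ell}$ be $F$ composed with itself $\ell$ times as in the Definitions section. I first bound $Q(F^{\ell})$ from below. Consider the adversary matrix $\Gamma$ whose rows are indexed by weight-$m$ inputs, whose columns are indexed by weight-$(m+1)$ inputs, with $\Gamma_{xy}=1$ exactly when $y$ arises from $x$ by flipping one $0$ to a $1$. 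Then $\Gamma$ is the biadjacency matrix of an $(m,m{+}1)$-biregular bipartite graph, so $\|\Gamma\|=\sqrt{m(m+1)}$, while each $\Gamma^{(i)}$ is a partial permutation matrix and hence $\|\Gamma^{(i)}\|\le 1$; therefore $Adv^{\pm}(F)\ge Adv(F)\ge\sqrt{m(m+1)}\ge m$. Since $Adv^{\pm}$ composes exactly, $Adv^{\pm}(F^{\ell})=Adv^{\pm}(F)^{\ell}\ge m^{\ell}$, and by $Q=\Theta(Adv^{\pm})$ we get $Q(F^{\ell})=\Omega(m^{\ell})$. It remains to give a reduction $F^{\ell}\leqslant\dyck_{c_{1}\ell m,\,c_{2}(2m)^{\ell}}$.

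I would build this reduction $\Phi_{\ell}$ recursively, mirroring $F^{\ell}=F\circ F^{\ell-1}$. At the base, for $y\in\{0,1\}^{2m}$ let $v(y)$ be the word of length $2m$ whose $i$-th symbol is \texttt{(} if $y_{i}=0$ and \texttt{)} if $y_{i}=1$, and set $\Phi_{1}(y)=\texttt{(}^{2m}\,v(y)\,\texttt{)}^{2m}$; its imbalance (the number of \texttt{(} minus the number of \texttt{)}) is $2m-2\,\mathrm{wt}(y)$, so $\Phi_{1}(y)$ is a Dyck word of depth at most $3m$ when $\mathrm{wt}(y)=m$, and has imbalance exactly $-2$ when $\mathrm{wt}(y)=m+1$. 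For the step, given $x=(x^{(1)},\dots,x^{(2m)})$ with each $x^{(j)}$ in the domain of $F^{\ell-1}$, set $\Phi_{\ell}(x)=\texttt{(}^{2m}\,G_{1}G_{2}\cdots G_{2m}\,\texttt{)}^{2m}$ with $G_{j}=\texttt{(}\,\Phi_{\ell-1}(x^{(j)})$. The leading run of $2m$ opening parentheses is a ``reservoir'' keeping prefix balances nonnegative, and the single \texttt{(} prepended to each $\Phi_{\ell-1}(x^{(j)})$ turns a sub-word of imbalance $0$ (resp.\ $-2$) into one of imbalance $+1$ (resp.\ $-1$) --- one that behaves exactly like a single \texttt{(} (resp.\ \texttt{)}) for the surrounding parenthesis count. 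Every symbol of $\Phi_{\ell}(x)$ is either a fixed parenthesis or equal to some input bit $x_{i}$, so $\Phi_{\ell}$ is a legal reduction with $F^{\ell}(x)=1-\dyck(\Phi_{\ell}(x))$.

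The core is an inductive invariant established by tracking prefix balances. Writing $b=F^{\ell}(x)$: if $b=0$ then $\Phi_{\ell}(x)$ is a Dyck word whose prefix imbalances all lie in $[0,\,c_{1}\ell m]$, so $\Phi_{\ell}(x)\in L_{c_{1}\ell m}$; if $b=1$ then $\Phi_{\ell}(x)$ has total imbalance exactly $-2$, hence is not a Dyck word (so $\Phi_{\ell}(x)\notin L_{c_{1}\ell m}$), its prefix imbalances all lie in $[-2,\,c_{1}\ell m]$, and $-2$ is attained only by the full word --- the precise property that legitimizes the ``prepend \texttt{(}'' step one level up. The driving computation: after the reservoir the balance is $2m$; just before the $j$-th block $G_{j}$ it equals $2m+\sum_{i<j}(1-2c_{i})$ with $c_{i}=F^{\ell-1}(x^{(i)})\in\{0,1\}$, which lies between $2m-(j-1)\ge 1$ and $2m+(j-1)\le 4m-1$; within $G_{j}$ the balance falls by at most $1$ and rises by at most $1+c_{1}(\ell-1)m$; and after all the $G_{j}$ the balance equals $4m-2\sum_{i}c_{i}$, which is $2m$ if $\sum_{i}c_{i}=m$ and $2m-2$ if $\sum_{i}c_{i}=m+1$, so the trailing $\texttt{)}^{2m}$ ends at $0$ or at $-2$ respectively. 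These bounds close the induction for any constant $c_{1}\ge 4$, and a geometric-series estimate gives $|\Phi_{\ell}|=2m\,|\Phi_{\ell-1}|+6m\le 6(2m)^{\ell}$; padding with trailing \texttt{()} pairs up to length exactly $c_{2}(2m)^{\ell}$ (take $c_{2}=6$) preserves membership in $L_{c_{1}\ell m}$ when $b=0$ and non-membership when $b=1$. Combining the reduction with the adversary lower bound, $Q(\dyck_{c_{1}\ell m,\,c_{2}(2m)^{\ell}})\ge Q(F^{\ell})=\Omega(m^{\ell})$.

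The main obstacle I anticipate is exactly this prefix-balance bookkeeping in the inductive step: one must check simultaneously that the balance never drops below $0$ in the $b=0$ case, that in the $b=1$ case the minimal prefix imbalance equals $-2$ and is reached only at the very end, and that the maximal prefix imbalance never exceeds $c_{1}\ell m$; it is this last requirement, propagated through the recursion, that forces each level's reservoir to have size $\Theta(m)$ and thereby pins down the constant $c_{1}$.
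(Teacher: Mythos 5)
Your proposal is correct and follows the same high-level strategy as the paper's proof: it lower-bounds $Q\left(\exnn^{\ell}\right)$ via the (positive) adversary bound and its exact composition theorem, then exhibits a recursive reduction $\exnn^{\ell}\leqslant\dyck_{c_1\ell m, c_2(2m)^\ell}$ in which each level contributes an $O(m)$ additive term to the depth and a multiplicative factor of $\Theta(m)$ to the length. The main differences are in the details of the encoding. The paper represents the output of a (sub)function by a ``block'' with prefix imbalances always nonnegative and total imbalance $0$ or $+2$; at each level it concatenates the $2m$ blocks and appends a fixed suffix \texttt{)}$^{2m}$, so nonnegativity is maintained without any ``reservoir'' (the depth per level grows by $2(m+1)$). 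You instead represent the output by a sub-word with imbalance $0$ or $-2$ (so $\dyck$ flips the sign relative to $\exnn^\ell$), and the recursive step wraps each level in an opening/closing reservoir \texttt{(}$^{2m}\cdots$\texttt{)}$^{2m}$ and prepends a single \texttt{(} to every sub-block to turn imbalances $\{0,-2\}$ into $\{+1,-1\}$; this requires propagating a slightly more delicate invariant (prefix imbalances in $[-2, c_1\ell m]$ with $-2$ attained only at the very end) and roughly doubles the depth constant, but both constructions give $O(\ell m)$ depth and $O((2m)^\ell)$ length and are equally valid. Two small notational discrepancies worth flagging: your $\exnn$ is the complement/shift of the paper's $\ex{2m}{m}{m+1}$ (the paper's function equals $1$ when $|x|_0=m$, i.e.\ at Hamming weight $m$), which is why you end up with $F^\ell = 1-\dyck\circ\Phi_\ell$ rather than a direct equality — this is harmless since $Q(f)=Q(\neg f)$; and your $\|\Gamma\|=\sqrt{m(m+1)}$ computation for the $(m,m{+}1)$-biregular bipartite adversary matrix reproduces the cited $Adv(\exnn)>m$ from first principles, which is a nice addition. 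Your explicit padding step to hit length exactly $c_2(2m)^\ell$ is also a detail the paper glosses over.
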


\begin{proof}
We will use the partial Boolean function $\ex{m}{a}{b}=\begin{cases}
1, & \text{if \ensuremath{\left|x\right|_{0}=a}}\\
0, & \text{if \ensuremath{\left|x\right|_{0}=b.} }
\end{cases}$ 

We prove the theorem by a reduction $\left(\exnn\right) ^{\ell}\leqslant\dyck_{c_{1}\ell m,c_{2}\left(2m\right)^{\ell}}$.
It is known that $Adv^\pm\left(\exnn\right) \geq Adv\left(\exnn\right)>m$ \cite[Theorem~5.4]{ambainis2002quantum}.
The Adversary bound composes even for partial Boolean functions \cite[Lemma~1]{kimmel2012quantum}, therefore $Q\left( \left(\exnn\right)^{\ell}\right)=\Omega\left(m^{\ell}\right)$.
Via the reduction the same bound applies to $\dyck_{c_{1}\ell m,c_{2}\left(2m\right)^{\ell}}$.

Before we describe the reduction in detail, we sketch the main idea. 
Let $\im(x)=|x|_0-|x|_1$. Note that \[\ex{2m}{m}{m+1}(x)=0 \iff \im(x)=2\] \[\ex{2m}{m}{m+1}(x)=1 \iff \im(x)=0\] whereas 
\begin{equation*}\dyck_{k,n}(x)=1\iff \begin{aligned}\max_{p\text{ -- prefix of }x}{\im(p)}\leq k \wedge \\ \min_{p\text{ -- prefix of }x}{\im(p)}\geq 0 \wedge \\ \im(x)=0.\end{aligned}\end{equation*} If we could make sure that the minimum and maximum constraints are satisfied, $\dyck_{k,n}$ could be used to compute $\exnn$. To ensure the minimum constraint, we map each $0$ to $00$ and $1$ to $01$. However, this increases $\im(x)$ by $2m$ which can be fixed by appending $1^{2m}$ at the end. Importantly, the resulting sequence $x'$ has $\im(x')=\im(x)$. The first constraint (maximum over prefixes) can be fulfilled by having a sufficiently large $k$; $k=2m+3$ would suffice here. The same idea can be applied iteratively to $\ex{2m}{m}{m+1}$ where the inputs, which could now be the results of functions $\left(\ex{2m}{m}{m+1}\right)^{\ell-1}=x_i$, have been recursively mapped to sequences $x_i'$ with $\im(x_i')=\begin{cases}2 \text{ if }x_i=0\\0\text{ if }x_i=1\end{cases}$.

The reduction formally is as follows.

We call a string $B\in\left\{ 0,1\right\} ^{w}$ of even length
a $\left(w,h\right)$-sized block with width $w$ and height $h$
iff for any prefix $x$ of $B$: $0\leq \im(x)\leq h$
and either $\im(B)=0$ or $\im(B)=2$.

We establish a correspondence between inputs to $\left(\exnn\right)^\ell$ that satisfy the promise and  
$\left(w,h\right)$-sized blocks $B$
for appropriately chosen $w, h$,
so that $\left(\exnn\right)^\ell=1$
iff $\im(B)=0$.

For $l=0$ (the input bits), we have $0$ corresponding to a $(2,2)$-sized block of $00$ and $1$ to a $(2,2)$-sized block of $01$.

For $l>0$,
let us have input bits $x=(x_1, x_2, \ldots, x_{2m})$ of $\exnn$ satisfying the input promise. Assume that the bits (that could be equal to values of $\left(\exnn\right)^{\ell-1}$) correspond to $(w,h)$-sized blocks $B_1, B_2, \ldots, B_{2m}$. Define the sequence $B'=B_1B_2\ldots B_{2m}1^{2m}$. Then it is easy to verify the following claims:
\begin{enumerate}[1)]
    \item $B'$ is a $(2m(w+1),2(m+1)+h)$-sized block;
    \item The output bit of $\exnn(x)$ \emph{corresponds} to $B'$ because
    \[\im(B')=\sum_{i=1}^{2m}{\im(B_i)}+\im(1^{2m})=\begin{cases}2\text{ if }\exnn(x)=0\\ 0\text{ if }\exnn(x)=1\end{cases}.\]
\end{enumerate}

For $l=0$, the inputs correspond to $(2,2)$-sized blocks. Each level adds $2(m+1)$ to the height of the blocks reaching $2+2\ell(m+1)=O(m\ell)$. The width of blocks reaches $O((2m)^\ell)$.

Since for all $(w,h)$-sized blocks $B$: $\dyck_{h,w}(B)=1\iff \im(B)=0$ one can solve the $\left(\exnn\right)^\ell$ problem by running $\dyck_{h,w}$ on the \emph{corresponding} block.

See Figure \ref{fig:ex-dyck-reduction}.

\begin{figure}[H]
\begin{centering}
\begin{tikzpicture}[scale=0.7,transform shape]
\tikzstyle{invis} = [outer sep=0,inner sep=0,minimum size=0]
\tikzstyle{vt} = [very thick]
\draw [step=1cm, very thin,lightgray] (1,0) grid (18.5,4.5) node (v2) {};

\node[invis,label=left:{$0$}] at (1,0) {};
\node[invis,label=left:{$m$}] at (1,1) {};
\foreach \x in {2,...,4}
{
	\node[invis,label=left:{$\x m$}] at (1,\x) {};
}

\draw [vt](16,2) -- (18,0);
\draw [vt](16,2.1) -- (18,0.1);

\draw [decorate, decoration={brace, mirror, amplitude=6}](1,0) -- (16,0) node[invis,midway,yshift=-0.4cm]{$2m\cdot 6m$};
\draw [decorate, decoration={brace, mirror, amplitude=6}](16,0) -- (18,0) node[invis,midway,yshift=-0.4cm]{$2m$};

\draw[vt,fill=white](1,0) rectangle (2.9,2.1) node[midway] {$\exnn$};

\draw[vt,fill=white](3,0) rectangle (4.9,2.2) node[midway] {$\exnn$};

\draw[vt,fill=white](5,0) rectangle (6.9,2.3) node[midway] {$\exnn$};

\node at (8.5,1.5) {\Huge{$\dots$}};g

\draw[vt,fill=white](10,1.7) rectangle (11.9,4.1) node[midway] {$\exnn$};
\draw[vt,fill=white](12,1.8) rectangle (13.9,4.1) node[midway] {$\exnn$};
\draw[vt,fill=white](14,1.9) rectangle (15.9,4.1) node[midway] {$\exnn$};

\draw [vt](2.9,0) -- +(0.1,0);
\draw [vt](2.9,0.1) -- +(0.1,0);

\draw [vt](4.9,0) -- +(0.1,0);
\draw [vt](4.9,0.1) -- +(0.1,0);
\draw [vt](4.9,0.2) -- +(0.1,0);

\draw [vt](6.9,0) -- +(0.1,0);
\draw [vt](6.9,0.1) -- +(0.1,0);
\draw [vt](6.9,0.2) -- +(0.1,0);
\draw [vt](6.9,0.3) -- +(0.1,0);

\draw [vt](15.9,2) -- +(0.1,0);
\draw [vt](15.9,2.1) -- +(0.1,0);

\draw [vt](13.9,1.9) -- +(0.1,0);
\draw [vt](13.9,2) -- +(0.1,0);
\draw [vt](13.9,2.1) -- +(0.1,0);

\draw [vt](11.9,1.8) -- +(0.1,0);
\draw [vt](11.9,1.9) -- +(0.1,0);
\draw [vt](11.9,2) -- +(0.1,0);
\draw [vt](11.9,2.1) -- +(0.1,0);

\draw [vt](9.9,1.7) -- +(0.1,0);
\draw [vt](9.9,1.8) -- +(0.1,0);
\draw [vt](9.9,1.9) -- +(0.1,0);
\draw [vt](9.9,2) -- +(0.1,0);
\draw [vt](9.9,2.1) -- +(0.1,0);

\draw[vt,red, dashed](1,0) rectangle (18,4.1);

\end{tikzpicture}
\par\end{centering}
\caption{\label{fig:ex-dyck-reduction}The reduction $\exnn\circ\exnn\protect\leqslant\dyck_{4m+6,12m^{2}+2m}$.
The line of the graph follows the input word along the \emph{x}-axis and shows
the number of yet-unclosed parenthesis along the \emph{y}-axis  (i.e., a zoomed-out version of Figure \ref{f:pyramids}).
The input word $B_1 B_2 \dots B_{2m} 1^{2m}$ corresponds to the outer function $\exnn$ with $B_j$
being a block corresponding to the output of an inner $\exnn$.
The ticks at the starts and ends of blocks depict that if the line enters the block at height $i$, it exits at height $i$ or $i+2$. In the block the line never goes below $0$ or above $h+i$.
The red dashed part then forms a new block $B'$.
By replacing the blocks $B_j$ with blocks $B'$ we can further iterate $\exnn$ to get the reduction $\exnn\circ\left(\exnn\right)^{\ell-1}\protect\leqslant\dyck_{O\left(\ell m\right),O\left(\left(2 m \right)^{\ell}\right)}$.}
\end{figure}
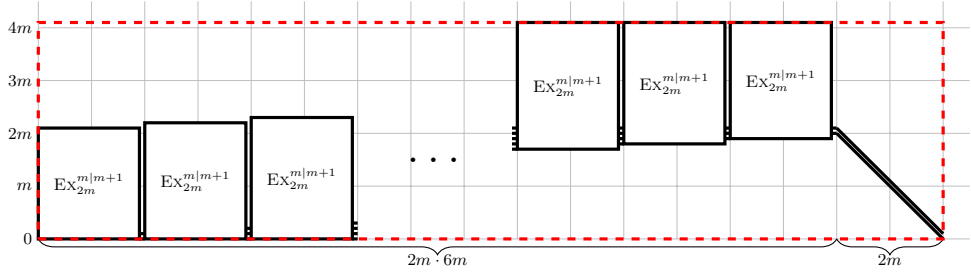

\end{proof}

\begin{theorem}
\label{t:dycklowerbound}
For any $\epsilon > 0$, there exists $c>0$ such that \[Q\left(\dyck_{c\log n,n}\right)=\Omega\left(n^{1-\epsilon}\right).\]
\end{theorem}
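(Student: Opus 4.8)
The plan is to instantiate Theorem~\ref{t:dycklowerbound-power} with a value of $m$ that is a constant depending only on $\epsilon$ and with $\ell$ growing like $\Theta(\log n)$, and then to reconcile the resulting parameters with those in the statement. Fix $\epsilon>0$ and pick an integer $m$ large enough that $\frac{\log m}{\log(2m)}>1-\epsilon$; for instance $m=2^{\lceil 1/\epsilon\rceil}$ works, since then $\frac{\log m}{\log(2m)}=\frac{t}{t+1}$ with $t\geq 1/\epsilon$. Given a large (even) $n$, let $\ell$ be the largest integer with $c_2(2m)^{\ell}\leq n$, where $c_1,c_2$ are the constants of Theorem~\ref{t:dycklowerbound-power}; then $\ell=\frac{\log n}{\log(2m)}-\Theta(1)$. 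Theorem~\ref{t:dycklowerbound-power} gives $Q\bigl(\dyck_{c_1\ell m,\,c_2(2m)^{\ell}}\bigr)=\Omega(m^{\ell})$, and since
\[
m^{\ell}=m^{\frac{\log n}{\log(2m)}-\Theta(1)}=\Theta(1)\cdot n^{\log m/\log(2m)}=\Omega\bigl(n^{1-\epsilon}\bigr),
\]
a bound of the right order is already available; what remains is to fit the parameters $(c_1\ell m,\;c_2(2m)^{\ell})$ inside $(c\log n,\;n)$.

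For the depth, $\ell m=\Theta(\log n/\log m)$ yields $c_1\ell m\leq\frac{c_1 m}{\log(2m)}\log n$, so we set $c:=\frac{c_1 m}{\log(2m)}$, a constant depending only on $\epsilon$, and then $c_1\ell m\leq c\log n$. The reduction in the proof of Theorem~\ref{t:dycklowerbound-power} maps an input of $\bigl(\exnn\bigr)^{\ell}$ to a $(w,h)$-sized block $B$ with $w=c_2(2m)^{\ell}\leq n$ and $h\leq c_1\ell m$, and every prefix of $B$ has imbalance in $[0,h]$; hence for \emph{every} $h'\geq h$ one still has $\dyck_{h',w}(B)=1\iff\im(B)=0$, so the reduction goes through verbatim with depth bound $c\log n\geq h$. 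For the width, pad $B$ to length exactly $n$ by appending copies of the balanced depth-$1$ word $01$: this leaves $\im(B)$ unchanged and keeps all prefix imbalances within $[0,h+1]\subseteq[0,c\log n]$, so it preserves both the block property and the value of $\dyck_{c\log n,n}$ (equivalently, one may cite $\dyck_{k,w}\leqslant\dyck_{k,n}$ for $w\leq n$, obtained by appending a balanced low-depth string). This gives $\bigl(\exnn\bigr)^{\ell}\leqslant\dyck_{c\log n,n}$, and therefore $Q(\dyck_{c\log n,n})=\Omega(m^{\ell})=\Omega(n^{1-\epsilon})$.

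Almost everything here is bookkeeping layered on top of Theorem~\ref{t:dycklowerbound-power}; the one point that deserves care is the step that enlarges the permitted depth from $c_1\ell m$ to $c\log n$. This is \emph{not} legitimate for $\dyck$ on arbitrary inputs (a word of depth exactly $c_1\ell m+1$ flips its value), but it is legitimate on the block-structured inputs the reduction produces, whose prefix imbalances never exceed $h\leq c_1\ell m$ — and that is all we need. The remaining routine check is that replacing $\ell$ by a floor costs only a multiplicative constant in $m^{\ell}$ and an additive constant in $\ell$, both absorbed by the $\Omega(\cdot)$, and that the parity/padding adjustments are harmless.
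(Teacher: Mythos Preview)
Your proof is correct and follows essentially the same route as the paper: choose $m$ constant (depending on $\epsilon$) so that the per-level adversary bound is at least $(2m)^{1-\epsilon}$ (equivalently, so that $\log m/\log(2m)>1-\epsilon$), set $\ell=\Theta(\log n)$, and invoke Theorem~\ref{t:dycklowerbound-power}. The only difference is that you are more explicit than the paper about the bookkeeping---padding the width from $c_2(2m)^{\ell}$ to $n$ with copies of $01$, and justifying why the depth bound can be loosened from $c_1\ell m$ to $c\log n$ on the specific block-structured inputs the reduction produces---whereas the paper elides these adjustments.
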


\begin{proof}
For any $\epsilon>0$, there exists an $m$ such that $Adv^\pm\left(\ex{2m}{m}{m+1}\right)\geq\left(2m\right)^{1-\epsilon}$.
Without loss of generality we may assume that $(2m)^\ell=n$. Using the reduction from Theorem \ref{t:dycklowerbound-power} with $\ell=\log_{2m}n$ we obtain a sequence with length $c_2\left(2m\right)^{\ell}=c_2 n$
and height $c_1 m\ell=\Theta\left(\log n\right)$. The query complexity
is at least $\left(\left(2m\right)^{1-\epsilon}\right)^{\ell}=\left(\left(2m\right)^{\ell}\right)^{1-\epsilon}=n^{1-\epsilon}$.
Therefore $Q\left(\dyck_{c \log n,n}\right)=\Omega\left(n^{1-\epsilon}\right)$.
\end{proof}

For constant depths the following bound can be derived:
\begin{theorem}
There exists a constant $c_1>0$ such that
    \[Q(\dyck_{c_1\ell,n})=\Omega(2^{\frac{\ell}{2}}\sqrt{n}).\]
\end{theorem}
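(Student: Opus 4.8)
The plan is to reuse the reduction machinery from Theorem~\ref{t:dycklowerbound-power}, but to trade off the two parameters differently: instead of pushing $\ell$ all the way up to $\log_{2m} n$ with a fixed $m$, I would fix the smallest nontrivial base case $m$ (so that $\ex{2m}{m}{m+1}$ has adversary bound bounded below by a constant $>1$, e.g. $m=1$, giving $Adv^\pm(\ex{2}{1}{2})\geq\sqrt{2}$ by \cite[Theorem~5.4]{ambainis2002quantum}) and let $\ell$ grow. First I would recall from Theorem~\ref{t:dycklowerbound-power}'s proof that the reduction $\left(\ex{2m}{m}{m+1}\right)^{\ell}\leqslant\dyck_{c_1\ell m,\,c_2(2m)^\ell}$ holds with $c_1,c_2$ absolute constants, and that by exact composition of the adversary bound, $Q\left(\left(\ex{2m}{m}{m+1}\right)^\ell\right)=\Omega\!\left(\left(Adv^\pm(\ex{2m}{m}{m+1})\right)^\ell\right)$.

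Next I would instantiate this with $m$ fixed so that $Adv^\pm(\ex{2m}{m}{m+1})\geq\sqrt 2$ (such an $m$ exists, e.g. $m=1$; in fact one should pick $m$ as in the $\Omega(\sqrt n)$ row of \cite{AGS18} to make sure the base function is genuinely a function on $2m$ bits with two-sided promise). With $m$ a constant, the reduction gives a Dyck instance of depth $c_1 m\ell = c_1'\ell$ and length $c_2(2m)^\ell$. Set $n=c_2(2m)^\ell$, i.e. $\ell=\log_{2m}(n/c_2)=\Theta(\log n)$ — but crucially we only need the \emph{depth} to be $O(\ell)$, not $O(\log n)$ in terms of $n$; the statement is phrased with $\ell$ as the free parameter and $n$ as the length. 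So given target depth parameter $\ell$, I would take the instance of length $n_0:=c_2(2m)^{\ell}$ and observe $Q(\dyck_{c_1'\ell,\,n_0})\geq Q\!\left(\left(\ex{2m}{m}{m+1}\right)^\ell\right)=\Omega\!\left((\sqrt2)^{\,\ell}\right)=\Omega\!\left(2^{\ell/2}\right)$. Since $n_0=\Theta((2m)^\ell)$ with $m$ constant, we have $(2m)^\ell=\Theta(n_0)$, hence $2^{\ell/2}=(2m)^{\ell\cdot\frac{\log 2}{2\log(2m)}}=\Theta\!\left(n_0^{\,\log 2/(2\log(2m))}\right)$, which is a fixed polynomial in $n_0$ — but that is \emph{not} quite the claimed bound $\Omega(2^{\ell/2}\sqrt n)$. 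To get the extra $\sqrt n$ factor for \emph{arbitrary} length $n\geq n_0$, I would pad: append a separate, disjoint Grover-search gadget on the remaining $n-n_0$ symbols whose output is already known to contribute $\Omega(\sqrt{n-n_0})$ to the Dyck query complexity — concretely, compose the whole construction with $\ex{2}{1}{2}$ (or an $\orf$-type gadget) on an extra block of length $\Theta(n)$, so that by exact composition the query complexity becomes $\Omega\!\left(2^{\ell/2}\cdot\sqrt n\right)$ while the depth only increases by an additive constant, staying $c_1\ell$ after renaming $c_1$.

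The main obstacle I expect is \textbf{bookkeeping the depth vs.\ length while inserting the $\sqrt n$ padding factor}: one must be careful that the extra length-$\Theta(n)$ gadget (i) does not blow up the depth beyond $O(\ell)$ — this is fine since a single extra layer of $\ex{2m}{m}{m+1}$, or even just nesting the $\ell$-fold composition \emph{inside} one more $\ex{\cdot}{\cdot}{\cdot}$ on $\Theta(n/n_0)$ copies, only adds $O(m)=O(1)$ to the height per the claims (1)--(2) in Theorem~\ref{t:dycklowerbound-power} — and (ii) genuinely multiplies the adversary bound by $\Omega(\sqrt n)$, which follows from $Adv^\pm(\ex{t}{t/2}{t/2+1})=\Omega(\sqrt t)$ and exact composition. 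A secondary subtlety is choosing the base $m$: any fixed $m\geq1$ works for the qualitative $2^{\ell/2}$ (with the constant $c_1$ absorbing the factor $m$), but one should double-check that the constant in the exponent is exactly $1/2$, i.e. that $Adv^\pm(\ex{2m}{m}{m+1})\geq\sqrt2$ for the chosen $m$ — for $m=1$ this is immediate from \cite[Theorem~5.4]{ambainis2002quantum}, which gives $Adv\geq m+1=2>\sqrt 2$ already at the first level, so $(\sqrt2)^\ell$ is in fact a conservative bound and the stated $2^{\ell/2}$ is safely attained.
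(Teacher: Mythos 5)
There is a genuine gap, and it has two intertwined parts: the choice of $m$ and the mechanism for the $\sqrt n$ padding.

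First, the choice $m=1$ (or any $m<4$) does not yield the claimed bound. Whatever padding you use, the length of the inner Dyck block is $n_0=c_2(2m)^\ell$, so the best multiplicative padding one can hope for from an outer AND/OR on $n/n_0$ copies is $\sqrt{n/n_0}=\sqrt{n/(c_2(2m)^\ell)}$. Multiplying by the inner bound $\Omega(m^\ell)$ gives $\sqrt{n/c_2}\cdot(m/2)^{\ell/2}$. For $m=1$ this is $\sqrt{n}\cdot 2^{-\ell/2}$, which is \emph{worse} than $\sqrt n$; for $m=2$ it is exactly $\sqrt n$ with no $2^{\ell/2}$ gain; you need $m\geq 4$ to obtain $2^{\ell/2}\sqrt n$. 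The paper picks $m=4$ precisely so that $4^\ell/\sqrt{8^\ell}=2^{\ell/2}$. Your reasoning forgets that the inner block already ``spends'' length $(2m)^\ell$ to earn the adversary factor $m^\ell$, so the net gain per level over plain Grover search on that block is only $(m/2)^{\ell/2}$, not $Adv^\ell$. (Also, the cited Theorem~5.4 gives $Adv(\exnn)>m$, not $\geq m+1$, so the claim ``$Adv\geq 2$'' for $m=1$ is unjustified.)

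Second, the padding mechanism you sketch does not work. Appending a \emph{disjoint} Grover gadget combines additively (roughly $\sqrt{a^2+b^2}$ for an outer 2-bit AND of two different inner functions), not multiplicatively, so it cannot give $2^{\ell/2}\cdot\sqrt n$. And wrapping the $\ell$-fold composition inside ``one more $\textsc{Ex}$ on $\Theta(n/n_0)$ copies'' is not depth-neutral: in the block reduction of Theorem~\ref{t:dycklowerbound-power}, one layer of $\ex{2m'}{m'}{m'+1}$ adds $2(m'+1)$ to the height, and here $m'=\Theta(n/n_0)$, which blows the depth up to $\Theta(n/n_0)$ rather than $O(1)$. (You conflated the inner $m$ with the fan-in of the outer function.) The paper instead composes with $\andf_{n/n_0}$ (with the promise ``all $1$s or exactly one $0$''), and realizes this by \emph{concatenating} $n/n_0$ Dyck blocks side by side; since the blocks never dip below $0$, the concatenation is a Dyck word iff every block is, and the depth is unchanged. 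That concatenation trick is the missing ingredient that gives the multiplicative $\sqrt{n/n_0}$ factor without touching the depth, and combined with $m=4$ it yields exactly $\Omega(2^{\ell/2}\sqrt n)$.
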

\begin{proof}
    Let $m=4$ in the Theorem \ref{t:dycklowerbound-power}. Thus there exist constants $c_{1},c_{2}>0$ such that $Q\left(\dyck_{c_{1}\ell,c_{2}8^{\ell}}\right)=\Omega\left(4^{\ell}\right)$. Consider the function $\andf_\frac{n}{c_{2}8^{\ell}} \circ \dyck_{c_{1}\ell,c_{2}8^{\ell}}$ where we have promise that $\andf_k$ has as an input either $k$ or $k-1$ ones. The query complexity of this function is $\Omega\left(\sqrt{\frac{n}{c_{2}8^{\ell}}}4^{\ell}\right)=\Omega(2^\frac{\ell}{2}\sqrt{n})$. The computation of the composition $\andf_\frac{n}{c_{2}8^{\ell}} \circ \dyck_{c_{1}\ell,c_{2}8^{\ell}}$ can be straightforwardly reduced to $\dyck_{c_{1}\ell,n}$ by a simple concatenation of $\dyck_{c_{1}\ell,c_{2}8^{\ell}}$ instances.
\end{proof}

\section{Lower bounds for \textsc{st-Connectivity} in grids}
\label{s:connectivity}

\subsection{Lower bounds for $\dirtwod_{n, k}$}
\label{ss:lbdirected}

\begin{theorem}
\label{t:dirconlowerbound}
Let $n\geq k$. Then for any $\epsilon > 0$
\[Q(\dirtwod_{n, k})=\Omega\left((\sqrt{n}k)^{1-\epsilon}\right).\]
\end{theorem}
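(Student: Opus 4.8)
The plan is to reduce the Dyck-language lower bound of Theorem~\ref{t:dycklowerbound} (and more precisely the bounded-depth version behind it) to $\dirtwod_{n,k}$ by embedding a bracket word as a monotone lattice path on the grid. The key observation is that a word $x\in\{(,)\}^{w}$ is a Dyck word of depth at most $h$ precisely when the walk that moves one step \emph{up} on ``$($'' and one step \emph{right} on ``$)$'' (say) stays inside the strip $0\le(\text{height})\le h$ and ends at height $0$; equivalently, the corresponding staircase path from the bottom-left to the top-right corner of a suitable $w\times h$ grid exists. So I want to set up a gadget where, for each input symbol $x_i$, a small constant-size piece of the grid either admits the ``up'' connection or the ``right'' connection, controlled by $O(1)$ queries to $x_i$; then the global $s$–$t$ path exists iff the Dyck condition holds.

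Concretely, first I would fix $m$ (a constant depending on $\epsilon$) and $\ell=\log_{2m}n$ as in Theorem~\ref{t:dycklowerbound-power}, so that $\dyck_{c_1\ell m,\,c_2(2m)^\ell}$ has query complexity $\Omega((c_2(2m)^\ell)^{1-\epsilon'})$ with depth $k'=\Theta(\log N)$ where $N=c_2(2m)^\ell$. Then I would build a grid of dimensions roughly $N\times k'$: column $i$ of the grid is a constant-width gadget that reads symbol $x_i$ of the Dyck instance; setting a single edge variable equal to $x_i$ (and its ``complement'' edge equal to $1-x_i$, obtained by one extra query, or by using the promise structure to avoid even that) forces the unique $s$–$t$ path, if it exists, to go ``up'' or ``right'' through that column according to $x_i$. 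Because the grid is directed with edges increasing both coordinates, a path is automatically a monotone staircase, which is exactly what encodes a prefix sum; the strip constraint $0\le\text{height}\le k'$ is enforced simply by the grid having only $k'+1$ rows; the endpoint constraint ($\im=0$) is enforced by demanding the path reach the top-right corner. Thus $\dirtwod$ on this $N\times k'$ grid computes $\dyck_{k',N}$, giving $Q(\dirtwod_{N,k'})=\Omega(N^{1-\epsilon'})$. Since $k'=\Theta(\log N)$, I should then amplify: to get the stated bound for a target pair $(n,k)$ with $k\ge\log n$, I would tile $\Theta(k/\log n)$ copies of the $N\times\Theta(\log n)$ grid — one stacked on top of another and joined in sequence along the long axis — so that the composed instance has height $\Theta(k)$ and width $\Theta(n)$, and the composition is with $\orf$ or $\andf$ over the blocks. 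Using exact composition of $Adv^{\pm}$ (and of $Q$ up to constants), the $\orf/\andf$ of $\Theta(\sqrt{n}k/\text{poly}\log)$-ish many hard instances multiplies the bound appropriately; tuning the block sizes and the value of $m$ against $\epsilon$ yields $\Omega((\sqrt n\,k)^{1-\epsilon})$.

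The main obstacle I expect is the bookkeeping of the reduction so that it is genuinely a reduction in the sense defined in the ``Reductions'' paragraph: each grid edge variable must be computable from $O(1)$ queries to the Dyck (hence $\ex{}{}{}$-composition) input, and the $s$–$t$ connectivity value must determine the Dyck value. The delicate points are (i) making the column gadget so that there is always \emph{some} $s$–$t$ path candidate whose shape is exactly the prefix-sum walk of $x$ — i.e. the gadget must not create ``illegal'' shortcuts that let a path survive even when the Dyck condition fails — and (ii) arranging the tiling/composition so that widths and heights match the target $n$ and $k$ while only losing a $\text{poly}\log$ factor that can be absorbed into the $\epsilon$. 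A secondary subtlety is that the underlying hard function $\ex{2m}{m}{m+1}$ is \emph{partial}, so the composed grid function is also partial; this is fine because $Adv^{\pm}$ composes for partial functions (Lemma~1 of \cite{kimmel2012quantum}), but the reduction must preserve the promise — the image of promised Dyck inputs must land in promised grid inputs — which is the kind of detail that needs care but no new idea. Once the gadget and the promise-preservation are nailed down, the rest is routine arithmetic with logarithms.
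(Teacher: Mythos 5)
Your high-level plan—reduce from the bounded-depth $\dyck$ lower bound, embed a bracket word as a monotone lattice path, and compose with $\orf/\andf$ to hit the target grid size—is the right one, and it is essentially the strategy the paper uses. However, the central embedding step has a genuine gap.

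You claim that a $\dyck_{k',N}$ instance embeds into a directed $N\times k'$ grid, with the $k'+1$ rows ``enforcing'' the depth bound and the top-right corner enforcing $\im=0$. This does not work in a \emph{directed} grid. If the path moves up on ``('' and right on ``)'', then its $y$-coordinate after $i$ steps is the number of opening parentheses among the first $i$ symbols (not the imbalance), and its endpoint is $(|x|_1,|x|_0)\approx(N/2,N/2)$. So the path inherently climbs to height $\Theta(N)$, which cannot be contained in $\Theta(\log N)$ rows; the grid having $k'+1$ rows bounds the \emph{number of opens}, not the imbalance. The depth bound corresponds to a \emph{diagonal} strip $0\le y-x\le k'$ inside an $\Theta(N)\times\Theta(N)$ bounding box, not to a horizontal strip. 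You cannot fold a long $\dyck$ instance back on itself to stay thin, because a directed path may only move right/up; folding is possible only in the undirected case (and indeed that is exactly what the paper does for $\undirtwod$). Because of this the ``stack $\Theta(k/\log n)$ copies of $N\times\Theta(\log n)$, join sequentially'' step has no valid instantiation in the directed grid, and the conclusion $Q(\dirtwod_{N,\Theta(\log N)})=\Omega(N^{1-\epsilon})$ you derive on the way—stronger than the theorem itself for that aspect ratio—is unsupported.

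The paper's proof avoids this by choosing \emph{short} $\dyck$ instances of length $m=\Theta(k)$ and depth $d=\Theta(\log k)$. Each such instance is the 45°-rotated trapezoid, a thin diagonal band of width $\Theta(d)$ living in a $\Theta(k)\times\Theta(k)$ bounding box, which fits the grid height $k$. Then $t=\Theta(n/d)$ such trapezoids are laid side by side across the width $n$ and connected in parallel, giving $\orf_t\circ\dyck_{d,m}\leqslant\dirtwod_{n,k}$. With $Q(\orf_t)=\Theta(\sqrt t)$ and $Q(\dyck_{d,m})=\Omega(m^{1-\epsilon'})$ one gets $\Omega(\sqrt{n/\log k}\cdot k^{1-\epsilon'})=\Omega((\sqrt n\,k)^{1-\epsilon})$. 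The per-symbol ``column gadget'' in the paper is also not $O(1)$ grid edges but $O(d)$ edges per symbol (a whole diagonal column of the trapezoid), all computable from the single bit $x_i$, so the reduction uses $O(1)$ queries per edge as required. In short: the direction of composition is reversed relative to what you propose—the $\orf$ runs along the long ($n$) axis over short $\dyck$ instances of length $\Theta(k)$, rather than stacking long $\dyck$ instances of length $\Theta(n)$ in the short ($k$) direction. The rest of your observations (promise preservation, avoiding illegal shortcuts, absorbing polylogs into $\epsilon$) are the right concerns but are downstream of the embedding that currently does not exist.
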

In particular, if we have a square grid then
\begin{cor}
For a square grid for any $\epsilon > 0$ \[Q(\dirtwod_{n, n})=\Omega\left(n^{1.5-\epsilon}\right).\]
\end{cor}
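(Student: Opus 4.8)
The plan is to reduce the Dyck-language lower bound of Theorem~\ref{t:dycklowerbound} to the directed-grid connectivity problem $\dirtwod_{n,k}$ by embedding a word $x$ over $\{\texttt{(},\texttt{)}\}$ as the geometry of a ``staircase'' inside the grid. The basic observation is that a directed path in $G_{n,k}$ from $(0,0)$ to $(n,k)$ takes exactly $n$ horizontal steps and $k$ vertical steps; at any point along the path the number of vertical steps taken so far plays the role of the current parenthesis depth (the imbalance $\im$ of a prefix), and forcing the path to stay within the grid enforces exactly the two-sided bound $0 \le \im(p) \le k$ that characterizes bounded-depth Dyck words. So I would set up a gadget where, reading the input word left to right, a symbol \texttt{(} forces the path one step up and a symbol \texttt{)} forces it one step down (or leaves it level, depending on how the columns are laid out), using $O(1)$ queried edges per symbol and filling in all remaining edges of the grid as present/absent in a fixed, input-independent way so that the \emph{only} way to get from $(0,0)$ to $(n,k)$ is to follow the unique staircase dictated by $x$; then $\dirtwod = 1$ iff $x$ is a Dyck word of depth $\le k$.

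Concretely, the steps in order: (i) Describe the gadget that turns one parenthesis into a constant-size block of edges, so that an input word of length $N = n$ over the Dyck alphabet is mapped to an instance of $\dirtwod_{n',k'}$ with $n' = \Theta(n)$ and $k' = \Theta(k)$, with each queried grid-edge being a fixed $O(1)$-query function of the $x_i$'s (this is what the definition of $\leqslant$ in Section~\ref{s:defs} demands). (ii) Verify the reduction is correct: a valid Dyck word of depth at most $k$ yields a connected instance via the staircase path, and conversely any $s$--$t$ path must have monotone coordinates, so its vertical-coordinate profile is exactly the prefix-imbalance profile of $x$, which lies in $[0,k]$ and ends at $0$ precisely when $x \in L_k$; blocking edges appropriately rules out any non-staircase path. (iii) Invoke $\dyck_{c\log n, n} \leqslant \dirtwod_{\Theta(n),\Theta(\log n)}$ together with Theorem~\ref{t:dycklowerbound} to get $Q(\dirtwod_{\Theta(n),\Theta(\log n)}) = \Omega(n^{1-\epsilon})$. (iv) Boost this to the stated bound by a composition / amplification argument: take $\Theta(k/\log n)$ independent copies of the depth-$\log n$ Dyck problem, each on a string of length $\Theta(n \log n / k)$ — wait, more carefully, stack the grids so the heights add and the widths are $\Theta(\sqrt{n}\cdot k / \text{(height per copy)})$ — and use a \textsc{Or}/\textsc{And} composition together with the exact composition of $Adv^\pm$ (Lemma~6 of \cite{kimmel2012quantum}) to multiply the lower bound. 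The parameters should be chosen so that the total grid is $n \times k$ and the product of the per-block adversary bounds is $(\sqrt{n}k)^{1-\epsilon}$.

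The main obstacle I anticipate is step~(iv): getting the \emph{product} of dimensions $\sqrt{n}\cdot k$ rather than something like $n$ or $\sqrt{nk}$. The single Dyck embedding only gives an $n^{1-\epsilon}$ bound on a grid of height $\Theta(\log n)$, so one needs to genuinely exploit the extra height budget $k \gg \log n$. The natural idea is to run roughly $k/\log n$ Dyck instances ``in parallel'' along disjoint height-bands of the grid — but a directed grid path is a \emph{single} monotone path, so it cannot independently traverse parallel bands; instead one likely composes in the ``series'' direction, i.e. the outer function is an \textsc{And} of $\Theta(k/\log n)$ Dyck-word blocks laid end-to-end vertically, which also consumes width. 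Balancing the width cost against the height gain so that the grid ends up $n\times k$ while the adversary bound becomes $(\sqrt n k)^{1-\epsilon}$ requires choosing the per-block length as $\Theta(\sqrt{n\log n/k}\,\cdot\,?)$ and checking the arithmetic; the exponent bookkeeping with the $(1-\epsilon)$ losses (each composition level can only be taken a bounded number of times before the $\epsilon$ accumulates) is the delicate part and is where I would be most careful. The square-grid corollary is then immediate by setting $k = n$, giving $(\sqrt n \cdot n)^{1-\epsilon} = n^{1.5-\epsilon}$.
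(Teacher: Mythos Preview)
Your embedding in steps (i)--(ii) is garbled in a way that matters. In the directed grid $G_{n,k}$ every edge increases a coordinate, so there is no ``one step down'' for a closing parenthesis; the prefix imbalance $\im(p)$ is not the number of vertical steps taken so far (that quantity is monotone, while $\im$ is not). The paper's fix is the $45^\circ$ rotation: map \texttt{(} to the vector $(1,1)$ and \texttt{)} to $(1,-1)$, draw the Dyck walk inside the trapezoid bounded by $y=0$, $y=d$, $y=x$, $y=m-x$, and then rotate so that $(1,1)\mapsto$ ``up'' and $(1,-1)\mapsto$ ``right''. After rotation a $\dyck_{d,m}$ instance occupies a region of width $\Theta(d)$ and height $\Theta(m)$ in the grid --- narrow and tall, not wide and short.

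This geometric point is exactly what makes your step (iv) go wrong. You discard the parallel/\textsc{Or} composition because you picture the Dyck blocks as wide height-bands that a single monotone path cannot choose among. With the rotation, the blocks are thin vertical strips of width $\Theta(d)=\Theta(\log m)$, and you can lay $t=\Theta(n/\log k)$ of them side by side across the width of the grid; a monotone path goes right along the bottom row, enters one strip, climbs it (possible iff that Dyck instance is valid), and then goes right along the top row to $(n,k)$. This is precisely $\orf_t\circ\dyck_{d,m}\leqslant\dirtwod_{n,k}$ with $m=\Theta(k)$, $d=\Theta(\log k)$, $t=\Theta(n/\log k)$, giving
\[
Q(\dirtwod_{n,k})\;\ge\;\Omega\!\left(\sqrt{t}\cdot Q(\dyck_{\log k,\,k})\right)\;=\;\Omega\!\left(\sqrt{n/\log k}\cdot k^{1-\epsilon}\right)\;=\;\Omega\!\left((\sqrt{n}\,k)^{1-\epsilon'}\right),
\]
and the corollary follows with $k=n$. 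Your proposed \textsc{And}/series composition, by contrast, is never worked out (you yourself flag the arithmetic as ``delicate'' and leave the block length undetermined), and without the rotation it does not obviously consume the height budget $k$ at all, since each Dyck block returns to its baseline. The missing idea is the rotation; once you have it, \textsc{Or} is the right outer function and the parameters fall out immediately.
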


\begin{proof}[Proof of Theorem \ref{t:dirconlowerbound}]
For any sequence $w$ of $m$ opening and closing parentheses it is possible to plot the changes of depth, i.e., the number of opening parentheses minus the number of closing parentheses, for all prefixes of the sequence, see Figure \ref{f:pyramids}.
\begin{figure}
    \centering
    \begin{tikzpicture}[scale=0.88]
    \tikzmath{
    int \x,\y,\newx,\newy,\depth;
    \y=0;
    \x=0;
    \depth=4;
    for \delta in {1,1,-1,-1,1,1,1,-1,-1,1,-1,-1}%
    {
        \newx=\x+1;
        if (\delta > 0) then
        {
            \paren = "(";
        }   
        else
        {
            \paren = ")";
        };
        \newy=\y+\delta;
        {
            \path (\x,0) -- (\newx,0) node [midway, below,yshift=-20pt] {\LARGE \texttt{\paren}};
            \draw[-Stealth,very thick] (\x,\y) -- (\newx,\newy);
        };
        \y=\newy;
        \x=\newx;
    };
    {
        \draw[->] (-0.5,0) -- node[above,pos=1] {$x$} (\newx+0.5, 0);
        \draw[->] (0,-0.5) -- node[left,pos=1] {$y$} (0, \depth+0.5);
        \draw[dashed] (-0.5,\depth) -- node[above, pos=0.9] {$y=d=4$} (\newx+0.5, \depth);
        \draw (-0.5, 0.3) node {$(0,0)$};
        \path [preaction={contour=8pt,rounded corners=10,draw}] (0,0) -- (\depth,\depth) -- (\newx-\depth, \depth) -- (\newx, 0) -- cycle;
    };
    };
    \end{tikzpicture}    
    \caption{Representation of the Dyck word ``\texttt{(())((())())}''}
    \label{f:pyramids}
\end{figure}
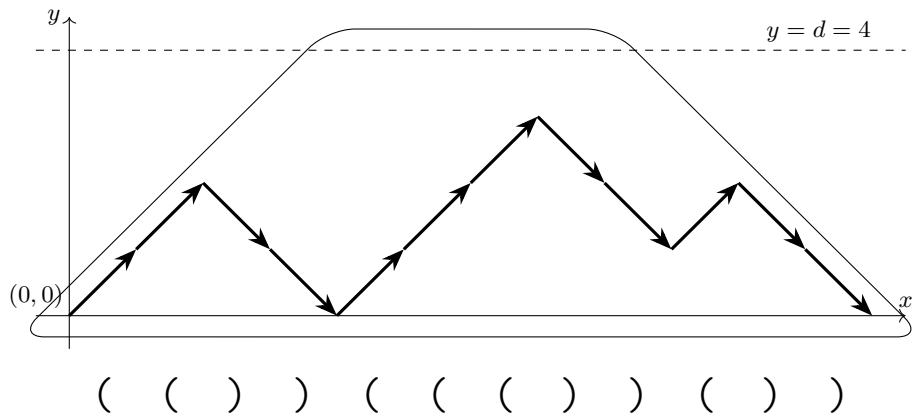
We can connect neighboring points by vectors $(1,1)$ and $(1,-1)$ corresponding to opening and closing parentheses respectively. Clearly $w\in L_d$ if and only if the path starting at the origin $(0,0)$ ends at $(m,0)$ and never crosses $y=0$ and $y=d$. Consequently a path corresponding to $w\in L_d$ always remains within the trapezoid bounded by $y=0$, $y=d$, $y=x$, $y=-x+m$. This suggests a way of mapping $\dyck_{d,m}$ to the $\dirtwod_{n, k}$ problem:
\begin{enumerate}
    \item An opening parenthesis in position $i$ corresponds to a ``column'' of upwards sloping available edges $(i-1,l)\rightarrow (i,l+1)$ for all $l \in \{0,1, \ldots, d-1\}$ such that $i-1+l$ is even.
    A closing parenthesis in position $i$ corresponds to downwards sloping available edges $(i-1,l)\rightarrow (i,l-1)$ for all $l \in \{1, \ldots, d\}$ such that $i-1+l$ is even. See Figure \ref{f:varmap}.
    \begin{figure}[H]
    \centering
    \begin{tikzpicture}[scale=0.8]
    \path (0,0) -- (0,8) node [midway, xshift=-20pt] {\LARGE \texttt{(}$\implies$};
    \path (4,0) -- (4,8) node [midway, xshift=-20pt] {\LARGE \texttt{)}$\implies$};
        \tikzmath{
            for \y in {0,...,3}%
            {
                {
                    \draw[-Stealth] (0,2*\y) -- ++(1,1);
                    \draw[-Stealth,dotted] (0,2*\y+2) -- ++(1,-1);
                    \draw[-Stealth,dotted] (4,2*\y) -- ++(1,1);
                    \draw[-Stealth] (4,2*\y+2) -- ++(1,-1);
                };
            };
        };
    \end{tikzpicture}    
    \caption{$\dyck_{d,m}$ to $\dirtwod$ variable mapping}
    \label{f:varmap}
    \end{figure}    

    \item The edges outside the trapezoid adjacent to the trapezoid are forbidden (see Figure \ref{f:pyramids2}), i.e., it is sufficient to ``insulate'' the trapezoid by a single layer of forbidden edges. The only exception are the edges adjacent to the $(0,0)$ and $(m,0)$ vertex as those will be used in the construction (step 4).
    \begin{figure}[H]
    \centering
    \begin{tikzpicture}[scale=0.75]
    \tikzmath{
    int \x,\y,\newx,\newy,\depth;
    \y=0;
    \x=0;
    \depth=4;
    for \delta in {1,1,-1,-1,1,1,1,-1,-1,1,-1,-1}%
    {
        \newx=\x+1;
        if (\delta > 0) then
        {
            \paren = "(";
        }   
        else
        {
            \paren = ")";
        };
        \newy=\y+\delta;
        {
            \path (\x,0) -- (\newx,0) node [midway, below,yshift=-20pt] {\LARGE \texttt{\paren}};
            \draw[-Stealth,very thick] (\x,\y) -- (\newx,\newy);
        };
        \y=\newy;
        \x=\newx;
    };
    {
        \draw[-Stealth,dotted] (0,0) -- (1,-1);
        \draw[-Stealth,dotted] (1,-1) -- (2,0);
        \draw[-Stealth,dotted] (2,0) -- (3,-1);
        \draw[-Stealth,dotted] (3,-1) -- (4,0);
        \draw[-Stealth,dotted] (4,0) -- (5,-1);
        \draw[-Stealth,dotted] (5,-1) -- (6,0);
        \draw[-Stealth,dotted] (6,0) -- (7,-1);
        \draw[-Stealth,dotted] (7,-1) -- (8,0);
        \draw[-Stealth,dotted] (8,0) -- (9,-1);
        \draw[-Stealth,dotted] (9,-1) -- (10,0);
        \draw[-Stealth,dotted] (10,0) -- (11,-1);
        \draw[-Stealth,dotted] (11,-1) -- (12,0);

        \draw[-Stealth,dotted] (0,2) -- (1,1);
        \draw[-Stealth,dotted] (1,3) -- (2,2);
        \draw[-Stealth,dotted] (2,4) -- (3,3);
        \draw[-Stealth,dotted] (3,5) -- (4,4);
        \draw[-Stealth,dotted] (4,4) -- (5,5);
        \draw[-Stealth,dotted] (5,5) -- (6,4);
        \draw[-Stealth,dotted] (6,4) -- (7,5);
        \draw[-Stealth,dotted] (7,5) -- (8,4);
        \draw[-Stealth,dotted] (8,4) -- (9,5);
        \draw[-Stealth,dotted] (9,3) -- (10,4);
        \draw[-Stealth,dotted] (10,2) -- (11,3);
        \draw[-Stealth,dotted] (11,1) -- (12,2);
        
        \draw[-Stealth] (3,3) -- (4,2);
        \draw[-Stealth] (4,2) -- (5,3);
        \draw[-Stealth] (5,3) -- (6,4);
        \draw[-Stealth] (6,0) -- (7,1);
        \draw[-Stealth] (7,1) -- (8,0);
        \draw[-Stealth] (8,4) -- (9,3);

        \draw[-Stealth,dotted] (1,1) -- (2,0);
        \draw[-Stealth,dotted] (2,0) -- (3,1);
        \draw[-Stealth,dotted] (2,2) -- (3,3);
        \draw[-Stealth,dotted] (3,1) -- (4,2);
        \draw[-Stealth,dotted] (3,3) -- (4,4);
        \draw[-Stealth,dotted] (4,2) -- (5,1);
        \draw[-Stealth,dotted] (4,4) -- (5,3);
        \draw[-Stealth,dotted] (5,1) -- (6,0);
        \draw[-Stealth,dotted] (5,3) -- (6,2);
        \draw[-Stealth,dotted] (6,2) -- (7,1);
        \draw[-Stealth,dotted] (6,4) -- (7,3);
        \draw[-Stealth,dotted] (7,1) -- (8,2);
        \draw[-Stealth,dotted] (7,3) -- (8,4);
        \draw[-Stealth,dotted] (8,0) -- (9,1);
        \draw[-Stealth,dotted] (8,2) -- (9,3);
        \draw[-Stealth,dotted] (9,1) -- (10,0);
        \draw[-Stealth,dotted] (9,3) -- (10,2);
        \draw[-Stealth,dotted] (10,0) -- (11,1);

        \draw[->] (-0.5,0) -- (\newx+0.5, 0);
        \draw[->] (0,-0.5) -- (0, \depth+0.5);
        \draw[dashed] (-0.5,\depth) -- (\newx+0.5, \depth);
        \path [preaction={contour=8pt,rounded corners=10,draw,dashed}] (0,0) -- (\depth,\depth) -- (\newx-\depth, \depth) -- (\newx, 0) -- cycle;
        \draw[-Stealth] (7,6.5) -- node[pos=1,right,align=left] {Available edges} ++(1,0);
        \draw[-Stealth,very thick] (7,6) -- node[pos=1,right,align=left] {Available edges reachable from origin} ++(1,0);
        \draw[-Stealth,dotted] (7,5.5) -- node[pos=1,right,align=left] {Forbidden edges} ++(1,0);
    };
    };
    \end{tikzpicture}    
    \caption{Mapping of a complete input corresponding to Dyck word ``\texttt{(())((())())}'' to $\dirtwod$}
    \label{f:pyramids2}
\end{figure}
    \item Rotate the trapezoid by 45 degrees counterclockwise. This isolated trapezoid can be embedded in a directed grid and its starting and ending vertices are connected by a path if and only if the corresponding input word is valid. 
    \item Finally we can lay multiple independent trapezoids side by side and connect them in parallel forming an $\orf_t$ of $\dyck_{d,m}$ instances; see Figure \ref{f:trapezoids}.
    \begin{figure}[H]
    \centering
    \begin{tikzpicture}[scale=0.4]
        \draw [step=1,dotted] (0,0) grid (30,10);

        \foreach \x in {0,5,...,21} 
        {
        	\draw[thick,fill=gray!30,dashed] (\x,0)--++(0,4)--++(5,5)--++(4,0) coordinate (exit\x) -- cycle;
        	\draw[very thick,->] (exit\x)--++(0, 1);
        }

        \draw[very thick,->] (0,0)--(20,0);
        \draw[very thick,->] (9,10)--(30,10);
    
    \end{tikzpicture}    
    \caption{Reduction $\orf_t\circ \dyck \leqslant \dirtwod$}
    \label{f:trapezoids}
    \end{figure}
\end{enumerate}
This concludes the reduction $\orf_t\circ \dyck_{d,m} \leqslant \dirtwod_{n,k}$, where $n=(d+1)t+\frac{m}{2}-d-1$ and $k=\frac{m}{2}+1$. By the well known composition result of Reichardt \cite{Reichardt11} we know that $Q(\orf_t\circ \dyck_{d,m})=\Theta(Q(\orf_t)\cdot Q(\dyck_{d,m}))$.
All that remains is to pick suitable $t$, $d$ and $m$ for the proof to be complete. Let $k$ be the vertical dimension of the grid and $k\leq n$. Then we take $m=\Theta(k)$, $d=\log{m}$ and $t=\frac{n}{d}$.

\end{proof}

\subsection{Lower bounds for $\undirtwod_{n, k}$}
Even though it is possible to use the construction from Section \ref{ss:lbdirected} to give a lower bound of $\Omega\left((\sqrt{n}k)^{1-\epsilon}\right)$ for the undirected case because the paths for each instance of $\dyck$ never bifurcate or merge, this lower bound can be further improved to a nearly tight estimate.

\begin{theorem}
\label{t:undirconlowerbound}
Let $n\geq k$ and $k=\Omega(\log{n})$. Then for any $\epsilon>0$
\[Q(\undirtwod_{n, k})=\Omega\left((nk)^{1-\epsilon}\right).\]
\end{theorem}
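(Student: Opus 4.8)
The strategy is to exploit the single structural freedom the undirected grid has over the directed one: an undirected path need not be monotone, so a long bounded-depth Dyck instance can be \emph{folded} to fill the whole grid rather than only a thin diagonal band of it. By Theorem~\ref{t:dycklowerbound}, for every $\epsilon'>0$ there is a constant $c$ with $Q(\dyck_{c\log M,M})=\Omega(M^{1-\epsilon'})$, so it is enough to construct, writing $D=c\log M$, a reduction $\dyck_{D,M}\leqslant\undirtwod_{n,k}$ in which $M=\Theta(nk/D)=\Theta(nk/\log(nk))$. Then $Q(\undirtwod_{n,k})\ge Q(\dyck_{D,M})=\Omega\big((nk/\log(nk))^{1-\epsilon'}\big)=\Omega\big((nk)^{1-\epsilon}\big)$ after absorbing the polylogarithmic factor into the exponent; this is where the hypotheses $n\ge k$ and $k=\Omega(\log n)$ are used, namely to guarantee $D\le k$ so that at least one stripe of the fold fits.

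For the reduction I reuse the trapezoid gadget of Theorem~\ref{t:dirconlowerbound}, but \emph{without} the $45^\circ$ rotation, so the Dyck trapezoid is laid out flat as an axis-aligned band of height $\Theta(D)$ and width $\Theta(M)$: each parenthesis contributes one vertical grid edge (upward for \texttt{(}, downward for \texttt{)}) followed by one horizontal grid edge, the remaining edges inside the band are forbidden, and the outside is insulated by a single layer of forbidden edges exactly as in steps~1--2 there. A flat band folds efficiently into an $n\times k$ rectangle, whereas the rotated diagonal band of the directed construction would not. As before, the forced path from the bottom-left corner traces the graph of the running imbalance and reaches the right end of the band --- at height equal to the final imbalance --- if and only if the word is a valid depth-$D$ Dyck word. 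I then fold this band into $\Theta(k/D)$ stacked horizontal stripes of width $\Theta(n)$: consecutive stripes run in opposite horizontal directions, are separated by $O(1)$ rows of forbidden edges, and at each stripe boundary the path is routed upward through a ``comb'' of $D+1$ mutually insulated vertical wires, one for every possible value of the running imbalance at that point, so that a fold transmits the imbalance unchanged and cannot alter it. Finally $(0,0)$ is the start of the first stripe, and the exit of the last stripe --- which is at imbalance $0$ exactly when the whole word is valid --- is wired to $(n,k)$; all this routing overhead costs only $O(D)=O(\log n)$ extra rows and columns. Since every grid edge is either a fixed constant or a fixed literal of a single Dyck variable, one query to the Dyck input answers any edge query, so this is a genuine $O(1)$-call reduction, and the subgraph is $(0,0)$--$(n,k)$ connected iff the input word lies in $L_D$.

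Choosing $M=\Theta(nk/D)$, the $\Theta(k/D)$ stripes of width $\Theta(n)$ together with the $O(\log n)$-wide buffers and combs fit inside the $n\times k$ grid, which finishes the argument. The step I expect to be delicate is the correctness of the folded gadget: one must check that the forbidden-edge insulation between stripes and inside the combs admits \emph{no} shortcut --- in particular that an invalid Dyck word leaves no alternative route from $(0,0)$ to $(n,k)$ --- and that each comb transmits all $D+1$ imbalance values without permitting a jump, since even a single leaky edge there would reduce instead to an easier, imbalance-restricted Dyck-type problem and lose the full $(nk)^{1-\epsilon}$ bound. The remaining bookkeeping --- pinning down the constants so that $D\le k$ follows from $k=\Omega(\log n)$, and observing that in the degenerate regime $k=O(\log n)$ a single unfolded stripe of width $\Theta(n)$ already yields $\Omega(n^{1-\epsilon})=\Omega((nk)^{1-\epsilon})$ --- is routine.
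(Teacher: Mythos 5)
Your high-level strategy matches the paper's: fold a single long bounded-depth Dyck instance to fill the whole $n\times k$ grid, use Theorem~\ref{t:dycklowerbound} on the resulting instance of length $M=\Theta(nk/\log(nk))$, and absorb the $\log$ into the $\epsilon$. Your ``comb'' of $D+1$ insulated wires at each fold is the same idea as the paper's nested-corner fold gadget (Figure~\ref{f:foldgadget}), and the parameter bookkeeping agrees.

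However, there is a genuine gap in the way you encode the Dyck word into grid edges, and it stems precisely from the design choice you flag as a simplification. You drop the $45^\circ$ rotation and propose an axis-aligned band where ``each parenthesis contributes one vertical grid edge (upward for \texttt{(}, downward for \texttt{)}) followed by one horizontal grid edge.'' In an \emph{undirected} grid there is no distinction between an ``upward'' and a ``downward'' vertical edge: the edge between $(c,\ell)$ and $(c,\ell+1)$ is a single undirected variable, traversable in either direction. Since the path may enter a column at any height (depending on the unknown prefix imbalance), the construction must make a \emph{set} of vertical edges available in that column; but the set $\bigl\{\{(c,\ell),(c,\ell+1)\}\bigr\}_\ell$ that an opening parenthesis would enable and the set $\bigl\{\{(c,\ell),(c,\ell-1)\}\bigr\}_\ell$ that a closing one would enable are the \emph{same} set of edges. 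So the edge assignment is identical whether $x_i=0$ or $x_i=1$, and the subgraph carries no information about the input. The $45^\circ$ rotation in the paper is not cosmetic: it is exactly what sends opening moves and closing moves to grid edges of \emph{different orientation} (one set horizontal, one set vertical), giving an injective variable mapping. Accordingly, the paper folds the diagonal (rotated) trapezoid bands, not flat ones.

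The fix is small but necessary: either keep the rotated/skewed encoding and fold the diagonal stripes as in the paper, or replace your per-parenthesis unit cell with an asymmetric zig-zag --- say, opening $=$ up-then-right and closing $=$ right-then-down --- so that the two symbols use disjoint grid edges (this is just the rotation realized locally on the Manhattan grid). With that repair the rest of your argument --- the comb at the folds preserving the imbalance, the insulation, the $\Theta(k/D)$ stripes, and the $k=\Omega(\log n)$ hypothesis ensuring $D\le k$ --- goes through and coincides with the paper's proof. Your remark that the ``no-shortcut'' claim deserves care is well taken; the paper also leaves this to the reader.
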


\begin{proof}
    We start off by representing an input as a path in a trapezoid, see Figure \ref{f:pyramids2}. But now instead of connecting multiple instances of $\dyck$ in parallel we will embed one long instance by folding it when it hits the boundary of the graph. To implement a fold we will use simple gadgets depicted in Figure \ref{f:foldgadget}.
    \begin{figure}[H]
    \centering
    \begin{tikzpicture}[scale=0.27]
        \draw [step=1,dotted] (0,0) grid (44,20);

        	\draw[thick,fill=gray!30,dashed] (0,0)--++(0,4)--++(16,16)--++(2,-2) -- cycle;
        	\draw[thick,fill=gray!30,dashed] (19,17)--++(2,-2)--++(-12,-12)--++(-2,2) -- cycle;
        	\draw[thick,fill=gray!30,dashed] (12,0)--++(-2,2)--++(18,18)--++(2,-2) -- cycle;
        	\draw[thick,fill=gray!30,dashed] (31,17)--++(2,-2)--++(-12,-12)--++(-2,2) -- cycle;
        	\draw[thick,fill=gray!30,dashed] (24,0)--++(-2,2)--++(18,18)--++(4,0) -- cycle;
            
            \draw[thick] (16,20)--++(5,0)--++(0,-5);
            \draw[thick] (17,19)--++(3,0)--++(0,-3);
            \draw[thick] (18,18)--++(1,0)--++(0,-1);
            \draw[thick] (28,20)--++(5,0)--++(0,-5);
            \draw[thick] (29,19)--++(3,0)--++(0,-3);
            \draw[thick] (30,18)--++(1,0)--++(0,-1);
            \draw[thick] (12,0)--++(-5,0)--++(0,5);
            \draw[thick] (11,1)--++(-3,0)--++(0,3);
            \draw[thick] (10,2)--++(-1,0)--++(0,1);
            \draw[thick] (24,0)--++(-5,0)--++(0,5);
            \draw[thick] (23,1)--++(-3,0)--++(0,3);
            \draw[thick] (22,2)--++(-1,0)--++(0,1);

    \end{tikzpicture}    
    \caption{Folding of a long $\dyck$ instance in an undirected grid}
    \label{f:foldgadget}
    \end{figure}
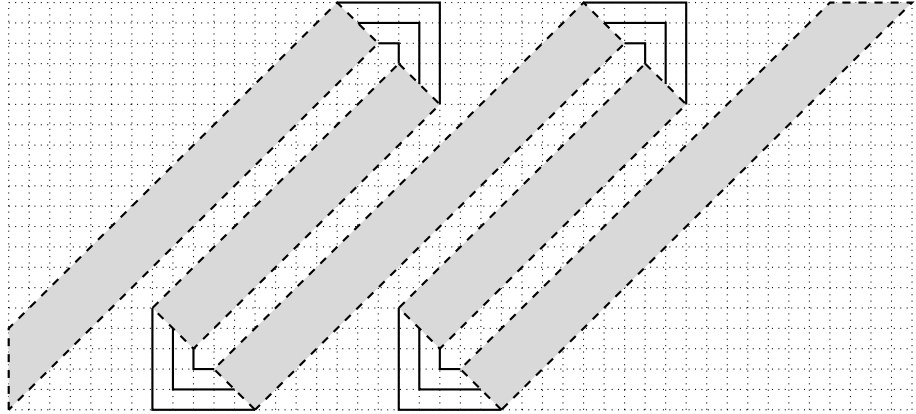
This way a $\dyck$ instance of length $m$ and depth $\log{m}$ can be embedded in an $n \times k$ grid such that $\frac{nk}{\log{m}}=\Theta(m)$. Using Theorem \ref{t:dycklowerbound} we conclude that solving $\undirtwod_{n,k}$ requires at least $\Omega\left((nk)^{1-\epsilon}\right)$ quantum queries.
\end{proof}

\subsection{Lower bounds for $d$-dimensional grids}
For undirected $d$-dimensional grids we give a tight bound on the number of queries required to solve connectivity.
\begin{theorem}
    For any $\epsilon>0$, for undirected $d$-dimensional grids of size $n_1\times n_2 \times \ldots \times n_d$ that are not ``almost-one-dimensional'', i.e., there exists $i\in [d]$ such that $\frac{\prod_{j=1}^d{n_j}}{n_i}=\Omega(\log{n_i})$:
    \[Q(\undirdd_{n_1, n_2,\ldots, n_d}) = \Omega((n_1 \cdot n_2 \cdot \ldots \cdot n_d)^{1-\epsilon}).\]
\end{theorem}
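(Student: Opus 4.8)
The plan is to reduce the already-established undirected 2D bound (Theorem~\ref{t:undirconlowerbound}) to the $d$-dimensional case by \emph{collapsing} all but two of the dimensions into a single effective dimension. Concretely, pick the index $i$ witnessing the non-almost-one-dimensional hypothesis, and let $P=\prod_{j\neq i} n_j$, so $P = \Omega(\log n_i)$. I would think of the $d$-dimensional grid as an $n_i \times P$ \emph{two-dimensional} grid whose ``$P$-axis'' is a snake-like Hamiltonian path through the $(d-1)$-dimensional box $\prod_{j\neq i}[n_j]$: inside each fixed slice $\{v\}\times \prod_{j\neq i}[n_j]$ (for $v\in[n_i]$), the box has a Hamiltonian path (it is a product of paths, hence a grid graph, which always has one), and consecutive slices are joined through the single coordinate-$i$ edges at the endpoints of these Hamiltonian paths. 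Aligning the snake so that it ``reverses'' at each end (boustrophedon order) makes consecutive slices share an endpoint, so the coordinate-$i$ edges glue them into one long path of length $\Theta(n_i P) = \Theta(n_1 \cdots n_d)$. This embeds an undirected $n_i \times P$ grid as an induced subgraph of the $d$-dimensional grid, with $s=(0,\ldots,0)$ and $t=(n_1,\ldots,n_d)$ as its two designated corners.

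The key steps, in order: (1) fix $i$ and set $P=\prod_{j\neq i}n_j$; (2) describe the boustrophedon Hamiltonian path through each $(d-1)$-dimensional slice and check that the turning points of consecutive slices are adjacent via a coordinate-$i$ edge, so the whole thing is a single path threading all $n_i P$ vertices; (3) observe that this gives an embedding of $G'_{n_i, P}$ into the $d$-dimensional grid in which every edge of $G'_{n_i,P}$ maps to a distinct grid edge and every grid edge \emph{not} used is simply set to ``absent'' (forbidden) --- exactly as in the insulation trick of Theorem~\ref{t:dirconlowerbound}; (4) conclude $\undirtwod_{n_i,P}\leqslant \undirdd_{n_1,\ldots,n_d}$, so $Q(\undirdd_{n_1,\ldots,n_d})\geq Q(\undirtwod_{n_i,P})$; (5) apply Theorem~\ref{t:undirconlowerbound} with $n\leftarrow n_i$, $k\leftarrow P$ (valid since $P=\Omega(\log n_i)$, and one may assume $n_i\geq P$ or else swap the roles, or just apply the symmetric version), obtaining $Q=\Omega((n_iP)^{1-\epsilon}) = \Omega((n_1\cdots n_d)^{1-\epsilon})$.

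The matching upper bound $O(n_1\cdots n_d)$ is immediate since that is the number of edge variables, so the bound is tight up to the $n^{\epsilon}$ slack, which is all the statement claims.

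The main obstacle I expect is step~(2)--(3): making sure the boustrophedon path really does ``fold'' consistently so that the coordinate-$i$ connector edges always land on the two current endpoints of the snake, and that the parity/orientation works out for all $d$ simultaneously (the Hamiltonian path of a product of paths reverses its endpoint between consecutive slices only if the slice path has odd length, so one may need to handle the parity of the $n_j$'s, e.g. by reflecting the Hamiltonian path or by absorbing a short ``detour''). None of this is deep --- grid graphs are Hamiltonian-connected between appropriate corner pairs --- but it is the only place where a genuine combinatorial check is needed rather than a direct appeal to the 2D theorem. Everything else is the same insulation-and-reduction bookkeeping already used for $\dirtwod$ and $\undirtwod$.
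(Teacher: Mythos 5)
Your approach — keep one dimension $i$ and thread the remaining $d-1$ dimensions into a single boustrophedon Hamiltonian path, yielding an embedded $n_i \times \prod_{j\neq i} n_j$ undirected 2D grid to which Theorem~\ref{t:undirconlowerbound} is applied — is essentially identical to the paper's proof, which performs the same fold one dimension at a time (collapsing $n_{d-1}\times n_d$ into $n_{d-1}n_d$ via the explicit alternating-parity bijection and iterating down to 2D). The parity/orientation worry you flag in step (2) is handled automatically in the paper by the reflection rule for odd $x_d$, which is exactly the boustrophedon; the remaining bookkeeping about which of $n_i$, $P$ plays the role of the long side in the 2D theorem is elided at roughly the same level of detail in both your write-up and the paper's.
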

\begin{proof}
    This follows from the 2D case by using the fact that a $d$-dimensional grid of size $n_1\times n_2\times \ldots \times n_{d-1}\times n_d$ contains as a subgraph a $(d-1)$-dimensional grid of size $n_1 \times n_2 \times \ldots \times n_{d-2} \times n_{d-1}n_d$. One way to see this is to consider a bijective mapping of the vertices $(x_1, \ldots, x_{d-1}, x_d)$ to $(x_1, \ldots, x_{d-2},x_dn_{d-1}+x_{d-1})$ if $x_d$ is even and to $(x_1, \ldots, x_{d-2},x_dn_{d-1}+n_{d-1}-1-x_{d-1})$ if $x_d$ is odd. It is a bijection because $x_d$ and $x_{d-1}$ can be recovered from $x_dn_{d-1}+n_{d-1}-1-x_{d-1}$ by computing the quotient and remainder on division by $n_{d-1}$. One can view this procedure as ``folding'' where we take layers (vertices corresponding to some $x_d=l$) and fold them into the $(d-1)$-st dimension alternating the direction of the layers depending on the parity of the layer $l$.
\end{proof}

For directed $d$-dimensional grids we can only slightly improve over the trivial lower bound of $n^{\frac{d}{2}}$.
\begin{theorem}\label{t:dirdlowerbound}
    For directed $d$-dimensional grids of size $n_1\times n_2 \times \ldots \times n_d$ such that $n_1\leq n_2 \leq \ldots \leq n_d$ and $\epsilon>0$:
    \[Q(\dirdd_{n_1,n_2,\ldots,n_d}) = \Omega((n_{d-1}\prod_{i=1}^d{n_i})^{\frac{1}{2}-\epsilon}).\]
\end{theorem}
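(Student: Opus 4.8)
The plan is to lift Theorem~\ref{t:dirconlowerbound} to $d$ dimensions by running many copies of the two‑dimensional construction \emph{in parallel}, using the $d-2$ smallest coordinate directions only to fan out from the source and fan in to the sink. (Unlike the undirected case, we cannot ``fold'' a long grid into the extra dimensions — a fold reverses a direction, which a directed grid forbids — so the extra dimensions can contribute only an OR, not a longer Dyck instance.) Concretely, I would prove the reduction
\[
\orf_{S}\circ\dirtwod_{n_d,\,n_{d-1}}\ \leqslant\ \dirdd_{n_1,n_2,\ldots,n_d},\qquad S=\prod_{i=1}^{d-2}(n_i+1),
\]
the hypothesis $n_d\ge n_{d-1}$ of Theorem~\ref{t:dirconlowerbound} being guaranteed by $n_1\le\cdots\le n_d$.

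To build the reduction, regard the directed grid as a stack of \emph{slices}: for each $\vec c=(c_1,\ldots,c_{d-2})$ with $0\le c_i\le n_i$, the vertices whose first $d-2$ coordinates equal $\vec c$ span, under the edges in directions $d-1$ and $d$, a copy of $G_{n_d,n_{d-1}}$ with source at the local corner $(x_{d-1},x_d)=(0,0)$ and sink at $(x_{d-1},x_d)=(n_{d-1},n_d)$. Given $S$ inputs to $\dirtwod_{n_d,n_{d-1}}$, the reduction (i) sets the direction-$(d-1)$ and direction-$d$ edges inside slice $\vec c$ according to the $\vec c$-th input, (ii) makes \emph{present} every direction-$i$ edge with $i\le d-2$ both of whose endpoints have last two coordinates equal to $(0,0)$ or equal to $(n_{d-1},n_d)$, and (iii) makes every other direction-$i$ edge with $i\le d-2$ \emph{absent}. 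The ``corner'' edges of (ii) connect the global source $(0,\ldots,0)$ to the source corner of every slice, and the sink corner of every slice to the global sink $(n_1,\ldots,n_d)$; each grid variable is a constant or a single input variable, so this is a legal reduction.

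The forward direction of correctness is immediate: if the $\vec c$-th input is a $1$-instance of $\dirtwod$, walk from the global source along corner edges to slice $\vec c$'s source corner, across slice $\vec c$, and along corner edges to the global sink. For the converse, suppose a directed source--sink path $P$ exists and let $v$ be its first vertex whose last two coordinates differ from $(0,0)$; before $v$, $P$ moves only along corner edges, so the vertex preceding $v$ is the source corner of some slice $\vec c^{*}$ and $v$ lies inside that slice. The crucial point is that $P$ never again visits a vertex with last two coordinates $(0,0)$: a grid step raises one coordinate by $1$, so reaching such a vertex would require arriving along a corner edge from another $(0,0)$-vertex, but $P$ has already left the $(0,0)$-locus. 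Hence after $v$ the path is confined to slice $\vec c^{*}$ until it first reaches a vertex with last two coordinates $(n_{d-1},n_d)$, which must be slice $\vec c^{*}$'s sink corner; so $P$ contains an internal source-to-sink path of slice $\vec c^{*}$ and the $\vec c^{*}$-th input is a $1$-instance. I expect this confinement argument to be the one delicate point: if the cross-slice edges were placed everywhere (or all along the bottom of each slice), a path could read a valid prefix of one $\dirtwod$ instance, hop to another slice, and finish through a valid suffix there, so the constructed function would compute something strictly weaker than $\orf_S\circ\dirtwod$ and the lower bound would collapse.

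Finally, since $\dirtwod$ and $\orf$ are total, Reichardt's characterization $Q=\Theta(Adv^{\pm})$ together with the exact composition of $Adv^{\pm}$ gives $Q(\orf_S\circ\dirtwod_{n_d,n_{d-1}})=\Theta\bigl(Q(\orf_S)\cdot Q(\dirtwod_{n_d,n_{d-1}})\bigr)=\Theta\bigl(\sqrt{S}\cdot Q(\dirtwod_{n_d,n_{d-1}})\bigr)$; applying Theorem~\ref{t:dirconlowerbound} with parameter $\epsilon$,
\[
Q(\dirdd_{n_1,n_2,\ldots,n_d})=\Omega\!\left(\sqrt{S}\cdot\bigl(\sqrt{n_d}\,n_{d-1}\bigr)^{1-\epsilon}\right)=\Omega\!\left(\bigl(n_{d-1}\textstyle\prod_{i=1}^{d}n_i\bigr)^{1/2-\epsilon}\right),
\]
where the last step is the elementary estimate $\sqrt{S}\,(\sqrt{n_d}\,n_{d-1})^{1-\epsilon}\ge(n_{d-1}\prod_{i=1}^{d}n_i)^{1/2-\epsilon}$, valid since all $n_i\ge1$ and $n_{d-1}\prod_{i=1}^{d}n_i=n_{d-1}^{2}\,n_d\prod_{i=1}^{d-2}n_i$ (compare exponents term by term).
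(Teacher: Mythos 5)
Your proposal is correct and takes essentially the same approach as the paper: compose an $\orf$ over the $d-2$ smallest coordinate directions with the hard 2D instance $\dirtwod_{n_d,n_{d-1}}$ (the paper writes this as $\dirtwod_{n_{d-1},n_d}$), fan out from the global source and fan in to the global sink through the two $(d-2)$-dimensional corner subgrids, forbid all other cross-slice edges, and apply the $Adv^{\pm}$ composition theorem together with Theorem~\ref{t:dirconlowerbound}. Your explicit ``confinement'' argument (that a source--sink path cannot hop between slices mid-instance) is a useful justification that the paper states only implicitly by construction, but the reduction, the parameters, and the final exponent arithmetic all match the paper's proof.
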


\begin{cor}
For directed $d$-dimensional grids of size $n\times n \times \ldots \times n$ and $\epsilon>0$:
\[Q(\dirdd_{n,n,\ldots,n}) = \Omega(n^{\frac{d+1}{2}-\epsilon}).\]
\end{cor}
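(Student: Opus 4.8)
This is the equal‑dimension case of Theorem~\ref{t:dirdlowerbound}: setting $n_1=\dots=n_d=n$ gives $n_{d-1}\prod_{i=1}^d n_i = n^{d+1}$, so Theorem~\ref{t:dirdlowerbound} yields $\Omega\bigl((n^{d+1})^{1/2-\delta}\bigr)=\Omega\bigl(n^{(d+1)/2-(d+1)\delta}\bigr)$ for every $\delta>0$, and choosing $\delta<\epsilon/(d+1)$ gives $\Omega(n^{(d+1)/2-\epsilon})$. Hence it suffices to prove Theorem~\ref{t:dirdlowerbound}, whose proof is not in the excerpt, and I would do so by bootstrapping the two‑dimensional bound of Theorem~\ref{t:dirconlowerbound} rather than redoing the Dyck reduction in $d$ dimensions. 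Concretely, the plan is to establish a reduction $\orf_P\circ\dirtwod_{n_d,n_{d-1}}\leqslant\dirdd_{n_1,\dots,n_d}$ with $P=\prod_{i=1}^{d-2}n_i$.

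For the construction, partition the vertices of $G_{n_1,\dots,n_d}$ into ``slices'' $S_{\vec c}=\{x_1=c_1,\dots,x_{d-2}=c_{d-2}\}$ indexed by $\vec c\in[n_1]\times\dots\times[n_{d-2}]$; viewed in the remaining coordinates $(x_{d-1},x_d)$, each slice is a directed $n_{d-1}\times n_d$ grid. Place an independent $\dirtwod_{n_d,n_{d-1}}$ instance inside each slice by letting the slice‑internal edges (those changing only $x_{d-1}$ or $x_d$) carry its variables. Forbid every edge that changes one of the first $d-2$ coordinates, with exactly two exceptions: the whole subgrid $\{x_{d-1}=0,\;x_d=0\}$ (the ``source spine'', a $(d-2)$‑dimensional grid connecting the global source $(0,\dots,0)$ to the $(0,0)$‑corner of every slice) and the whole subgrid $\{x_{d-1}=n_{d-1},\;x_d=n_d\}$ (the ``sink spine'', connecting the $(n_{d-1},n_d)$‑corner of every slice to the global sink). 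Each edge variable of $\dirdd$ is then a constant or a single variable of one $\dirtwod$ instance, so this is an $O(1)$‑query reduction.

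For correctness: since a directed path is monotone in every coordinate, any source‑to‑sink path travels along the source spine, then at some slice corner increases $x_{d-1}$ or $x_d$ — after which it can neither return to a spine nor cross into another slice — so it must traverse that one slice from corner to corner and can exit only onto the sink spine; conversely, if any slice's instance is connected corner‑to‑corner the obvious path works. Thus $\dirdd_{n_1,\dots,n_d}=\orf_P\circ\dirtwod_{n_d,n_{d-1}}$ under this map. Applying Reichardt's composition theorem (as in the proof of Theorem~\ref{t:dirconlowerbound}) together with $Q(\orf_P)=\Theta(\sqrt P)$ gives $Q(\dirdd_{n_1,\dots,n_d})=\Omega\bigl(\sqrt P\cdot Q(\dirtwod_{n_d,n_{d-1}})\bigr)$, and plugging in Theorem~\ref{t:dirconlowerbound} proves Theorem~\ref{t:dirdlowerbound}. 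For the corollary specifically, $P=n^{d-2}$ and $Q(\dirtwod_{n,n})=\Omega(n^{1.5-\epsilon})$, so $Q(\dirdd_{n,\dots,n})=\Omega\bigl(n^{(d-2)/2}\cdot n^{1.5-\epsilon}\bigr)=\Omega(n^{(d+1)/2-\epsilon})$.

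The only delicate step is the forbidden‑edge pattern together with the monotonicity argument above: I need the slices to be genuinely isolated, so that a monotone path entering $S_{\vec c}$ away from the sink spine cannot leak out before reaching the designated corner, while still letting both spines touch every slice corner and, crucially, not letting them create a source‑to‑sink route that bypasses the slices altogether. Everything else — that slices and spines partition the vertex set, that the map is $O(1)$‑query and really computes $\orf_P\circ\dirtwod$, and that the $\Theta(\cdot)$ factors and the $\epsilon$'s combine as claimed — is routine.
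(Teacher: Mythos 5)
Your proposal is correct and takes essentially the same route as the paper: the corollary is the equal-dimension specialization of Theorem~\ref{t:dirdlowerbound}, and the paper proves that theorem by exactly the construction you describe — independent hard $\dirtwod$ instances embedded in the slices indexed by $[n_1]\times\cdots\times[n_{d-2}]$, joined by a $(d-2)$-dimensional source spine at $(x_{d-1},x_d)=(0,0)$ and a sink spine at $(x_{d-1},x_d)=(n_{d-1},n_d)$, with all other inter-slice edges forbidden, yielding $\orf_{P}\circ\dirtwod_{n_d,n_{d-1}}\leqslant\dirdd$ and the bound via adversary composition. You are actually slightly more explicit than the paper on the monotonicity argument that shows a source-to-sink path must commit to a single slice, which the paper leaves implicit.
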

\begin{proof}[Proof of Theorem \ref{t:dirdlowerbound}]
    For each $I\in [n_1]\times [n_2]\times \ldots \times [n_{d-2}]$ we  take take a $2$-dimensional hard instance $G_I$ of $\dirtwod_{n_{d-1},n_d}$ having query complexity $\Omega(n_{d-1}^{1-\epsilon}n_d^{
    \frac{1}{2}-\epsilon})$. We then connect them in parallel like so:
    \begin{itemize}
        \item Make available the entire $(d-2)$-dimensional subgrid from $(1, 1, \ldots, 1, 1, 1)$ to $(n_1, n_2, \ldots, n_{d-2}, 1, 1)$ and similarly the subgrid from $(1, 1, \ldots, 1, n_{d-1}, n_d)$ to $(n_1, n_2, \ldots, n_{d-2}, n_{d-1}, n_d)$;
        \item For each $I\in [n_1]\times [n_2]\times \ldots \times [n_{d-2}]$ embed the instance $G_I$ in the subgrid $(I, 1, 1)$ to $(I, n_{d-1}, n_d)$;
        \item Forbid all other edges.
    \end{itemize}
    This construction computes $\orf_{\prod_{i=1}^{d-2}{n_i}}\circ \dirtwod_{n_{d-1},n_d}$ whose complexity is at least $\Omega(\sqrt{\prod_{i=1}^{d-2}{n_i}}n_{d-1}^{1-\epsilon}n_d^{\frac{1}{2}-\epsilon})=\Omega((n_{d-1}\prod_{i=1}^d{n_i})^{\frac{1}{2}-\epsilon})$.
\end{proof}

\pdfbookmark[1]{Conclusion}{concl}
\section{Conclusion}
\label{s:concl}

We have shown quantum lower bounds on the complexity of two problems in the query model:
\begin{itemize}
\item
recognizing Dyck languages of bounded depth (i.e. determining whether a sequence of parentheses is a balanced sequence of depth at most $k$);   
\item
determining connectivity on grids of dimension 2 and more where some edges may be missing (and we have query access to the information whether edges are present).
\end{itemize}
The first bound shows that the complexity increases exponentially in $k$, as $\Omega(\sqrt{n} c^{k})$. This provides a lower bound counterpart to the recent result of Aaronson et al. \cite{AGS18} who constructed an $\tilde{O}(\sqrt{n})$ quantum  algorithm for all star-free languages (which includes the Dyck languages of bounded depth) where $\tilde{O}$ term has exponential dependence on the complexity of the language (measured by the rank of its syntactic monoid).

The two bounds are not completely matching, as they involve related but different measures of complexity in the exponent (possible depth of the   sequence of parentheses vs. the rank of the syntactic monoid). Also, the lower bound has $c^k$ dependence while the upper bound has $\log^k n$ dependence. However, our result shows that some type of exponential dependence on a natural measure of complexity of the language is unavoidable.

We then used our results on Dyck languages to show the lower bounds on finding paths in 2D grids, by embedding sequences of parentheses into the grid so that existence of a path between two vertices corresponds to a sequence of parentheses. Having a lower bound with an exponential dependence on $k$ was essential for obtaining good lower bounds for the connectivity problems.

The resulting lower bounds for connectivity on 2D grids are $\Omega(n^{1.5-\epsilon})$ for the directed grid and $\Omega(n^{2-\epsilon})$ for the undirected grid. These results generalize to $\Omega(n^{(k+1)/2-\epsilon})$ for the directed grid
and $\Omega(n^{k-\epsilon})$ for the undirected grid in $k$ dimensions. The best upper bound is the trivial
$O(n^k)$ query quantum algorithm in all cases\footnote{We can show that the result of Ronagh \cite{ronagh2019quantum} on speeding up dynamic programming implies a $O(n^{5+k/2})$ query algorithm for the directed connectivity in $k$ dimensions. However, we have identified a possible problem in the proof of \cite{ronagh2019quantum} and are not aware of ways to fix it.}.

The lower bound for the directed 2D grid is important as it implies an $\Omega(n^{1.5-\epsilon})$ lower bound on the most natural approach towards a quantum algorithm for edit distance (by reducing it to connectivity on 2D grid). More generally, we think that the connectivity problems introduced in this paper are a natural new class of problems that is worth studying in the quantum algorithms context. 

Some directions for future work are:
\begin{enumerate}
    \item {\bf Better algorithm/lower bound for the directed 2D grid?} 
    Can we find an $o(n^2)$ query quantum algorithm or improve the lower bound in this paper? A nontrivial quantum algorithm would be particularly interesting, as it may imply a quantum algorithm for edit distance.  
    \item {\bf Quantum algorithms for directed connectivity?} More generally, can we come up with better quantum algorithms for directed connectivity? The span program method used by Belovs and Reichardt \cite{BelovsR12} for the undirected connectivity does not work in the directed case. As a result, the quantum algorithms for directed connectivity are typically based on Grover's search in various forms, from simply speeding up depth-first/breadth-first search to more sophisticated approaches \cite{A+19}. Developing other methods for directed connectivity would be very interesting.
    \item {\bf Quantum speedups for dynamic programming.} Dynamic programming is a widely used algorithmic method for classical algorithms and it would be very interesting to speed it up quantumly. This has been the motivating question for both the connectivity problem on the directed 2D grid studied in this paper and a similar problem for the Boolean hypercube in \cite{A+19} (where faster-than-classical algorithms were discovered for both the directed connectivity on hypercube and the problems that motivated the study of it). There are many more dynamic programming algorithms and exploring quantum speedups of them would be quite interesting.  
\end{enumerate}

\bibliographystyle{plain}

\phantomsection
\addcontentsline{toc}{chapter}{References}
\bibliography{biblio}

\end{document}